\documentclass[UKenglish,numberwithinsect]{lipics-v2016}

\usepackage{graphicx}
\usepackage{todonotes}
\usepackage{mathrsfs}
\usepackage{hyperref}
\usepackage{paralist}

\usepackage{framed}
\renewcommand{\vec}[1]{\boldsymbol{#1}}

\usepackage[noend]{algorithmic}
\usepackage[boxed]{algorithm}
\usepackage{xspace}

\newcommand{\algorithmiccommentt}[1]{\colorbox{black!10}{#1}}
\newcommand{\LineIf}[2]{ \STATE \algorithmicif\ {#1}\ \algorithmicthen\ {#2} }
\newcommand{\LineIfElse}[3]{ \STATE \algorithmicif\ {#1}\ \algorithmicthen\ {#2} \algorithmicelse\ {#3} }

\newcommand{\setcover}{\textsc{Set Cover}\xspace}
\newcommand{\setpartition}{\textsc{Set Partition}\xspace}
\newcommand{\subsetsum}{\textsc{Subset Sum}}

\newcommand{\cL}{\mathcal{L}}
\newcommand{\cR}{\mathcal{R}}

\newcommand{\cW}{\mathcal{W}}
\newcommand{\cZ}{\mathcal{Z}}

\newcommand{\wt}{\mathsf{wt}}

\DeclareMathOperator{\poly}{poly}

\newtheorem{observation}[theorem]{Observation}

\newtheorem{openproblem}{Open Problem}

\numberwithin{equation}{section}

\makeatletter

\newtheorem*{rep@theorem}{\rep@title}
\newcommand{\newreptheorem}[2]{%
\newenvironment{rep#1}[1]{%
 \def\rep@title{#2 \ref{##1}}%
 \begin{rep@theorem}[restated]}%
 {\end{rep@theorem}}}
\makeatother

\newreptheorem{theorem}{Theorem}

\newcommand{\yes}{\ensuremath{\mathbf{yes}}\xspace}
\newcommand{\no}{\ensuremath{\mathbf{no}}\xspace}

\newcommand{\solsize}{\ensuremath{s}}

\long\def\comment#1{}

\begin{document}

\EventEditors{Piotr Sankowski and Christos Zaroliagis} 
\EventNoEds{2} 
\EventLongTitle{24th Annual European Symposium on Algorithms (ESA 2016)}
\EventShortTitle{ESA 2016} 
\EventAcronym{ESA} 
\EventYear{2016} 
\EventDate{August 22--24, 2016} 
\EventLocation{Aarhus, Denmark} 
\EventLogo{} 
\SeriesVolume{57} 
\ArticleNo{[79]}

\Copyright{Jesper Nederlof}

\subjclass{G.2.2 Graph Algorithms, Hypergraphs}
\keywords{Set Cover,
Exact Exponential Algorithms,
Fine-Grained Complexity}

\title{Finding Large Set Covers Faster via the Representation Method\footnote{Funded by the NWO VENI project 639.021.438. This work was partly done while the author was visiting the Simons Institute for the Theory of Computing during the program `Fine-Grained Complexity and Algorithm Design' in the fall of 2015.}}
\author{Jesper Nederlof}
\authorrunning{Jesper Nederlof}
\affil[1]{Department of Mathematics and Computer Science, Eindhoven University of Technology, The Netherlands. \texttt{j.nederlof@tue.nl}}

\maketitle

\begin{abstract}
The worst-case fastest known algorithm for the Set Cover problem on universes with $n$ elements still essentially is the simple $O^*(2^n)$-time dynamic programming algorithm, and no non-trivial consequences of an $O^*(1.01^n)$-time algorithm are known. Motivated by this chasm, we study the following natural question: Which instances of Set Cover \emph{can} we solve faster than the simple dynamic programming algorithm? Specifically, we give a Monte Carlo algorithm that determines the existence of a set cover of size $\sigma n$ in $O^*(2^{(1-\Omega(\sigma^4))n})$ time. Our approach is also applicable to Set Cover instances with exponentially many sets: By reducing the task of finding the chromatic number $\chi(G)$ of a given $n$-vertex graph $G$ to Set Cover in the natural way, we show there is an $O^*(2^{(1-\Omega(\sigma^4))n})$-time randomized algorithm that given integer $\solsize=\sigma n$, outputs NO if $\chi(G) > \solsize$ and YES with constant probability if $\chi(G)\leq \solsize-1$.

On a high level, our results are inspired by the `representation method' of Howgrave-Graham and Joux~[EUROCRYPT'10] and obtained by only evaluating a randomly sampled subset of the table entries of a dynamic programming algorithm. 
\end{abstract}

\section{Introduction}
\label{sec:intro}
The \setcover problem is, after determining satisfiability of CNF formulas or Boolean circuits, one of the canonical NP-complete problems. It not only directly models many applications in practical settings, but also algorithms for it routinely are used as tools for theoretical algorithmic results (e.g.,~\cite{Dantsin200269}). It is a problem `whose study has led to the development of fundamental techniques for the entire field' of approximation algorithms.\footnote{As the Wikipedia page on Set Cover quotes the textbook by Vazirani~\cite[p15]{Vazirani:2001:AA:500776}.} However, the exact exponential time complexity of \setcover is still somewhat mysterious: We know algorithms need to use super-polynomial time assuming $P\neq NP$ and (denoting $n$ for the universe size) $O^*(2^{\Omega(n)})$ time assuming the Exponential Time Hypothesis, but how large the exponential should be is not clear. In particular, no non-trivial consequences of an $O^*(1.01^n)$-time algorithm are currently known.

Even though it is one of the canonical NP-complete problems, the amount of studies of exact algorithms for \setcover pales in comparison with the amount of literature on exact algorithms for \textsc{CNF-Sat}: Many works focus on finding $O^*(c^n)$-time algorithms for $c<2$ for \textsc{CNF-Sat} on $n$-variable CNF-formulas in special cases such as, among others, bounded clause width~\cite{DBLP:journals/algorithmica/Schoning02,Dantsin200269,DBLP:conf/soda/ChanW16}, bounded clause density~\cite{DBLP:conf/coco/CalabroIP06,DBLP:journals/corr/ImpagliazzoLPS14} or few projections~\cite{DBLP:conf/iwpec/KaskiKN12,Paulusma2015,saether}. Improved exponential time algorithms for special cases of problems other than \textsc{CNF-Sat} were also studied for e.g. \textsc{Graph Coloring} or \textsc{Traveling Salesman} on graphs bounded degree/average degree~\cite{DBLP:journals/mst/BjorklundHKK10,DBLP:journals/talg/BjorklundHKK12,DBLP:journals/iandc/CyganP15, DBLP:conf/icalp/GolovnevKM14}.

In this paper we are interested in the exponential time complexity of \setcover, and study which properties are sufficient to have improved exponential time algorithms. Our interest in finding faster exponential time algorithms for \setcover does not only stem from it being a canonical NP-complete problem, but also from its unclear relation with \textsc{CNF-Sat}. Intriguingly, on one hand \setcover has some similarities with the \textsc{CNF-Sat}: \begin{inparaenum} \item Both problems take an (annotated) hypergraph as input \item The improvability of the worst-case complexity of \textsc{CNF-Sat} is essentially equivalent to the improvability of the worst-case complexity of \textsc{Hitting Set}~\cite{DBLP:conf/coco/CyganDLMNOPSW12}, which is just a reparametrization\footnote{One way of stating \textsc{Hitting Set} in this context, is that we have an instance of the \setcover problem but aim to find an $O^*(2^{(1-\Omega(1))m})$ time algorithm, where $m$ denotes the number of sets.} of \setcover.\end{inparaenum}\ But, on the other hand the problems are quite different to our understanding: \begin{inparaenum} \item Most algorithms for \setcover use dynamic programming or some variant of inclusion exclusion, while most algorithms for \textsc{CNF-Sat} are based on branching \item No connection between the exponential time complexities of both problems is known (see~\cite{DBLP:conf/coco/CyganDLMNOPSW12})\end{inparaenum}. One hope would be that a better understanding of the exact complexity of \setcover might shed more light on this unclarity.
Moreover, Cygan et al.~\cite{DBLP:conf/coco/CyganDLMNOPSW12} also show that if we would like to improve the run time $O^*(f(k))$ of several parameterized algorithms to $O^*(f(k)^{1-\Omega(1)})$, we first need to find an $O^*(2^{(1-\Omega(1))n})$-time algorithm for \setcover. These parameterized algorithms include the classic algorithm for \subsetsum{}, as well as more recent algorithms for \textsc{Connected Vertex Cover} and \textsc{Steiner Tree}.

\subparagraph{Relevant previous work} The algorithmic results on \setcover that are the most relevant to our work are as follows: The folklore dynamic programming algorithm runs in $O^*(2^n)$ time. A notable special case of \setcover that can be solved in $O^*(2^{(1-\Omega(1))n})$ time is due to Koivisto~\cite{DBLP:conf/iwpec/Koivisto09}: He gives an algorithm that runs in time $O^*(2^{(1-\frac{1}{O(r)})n})$-time algorithm if all sets are at most of size $r$. Bj\"orklund et al.~\cite{DBLP:journals/siamcomp/BjorklundHK09} show that the problem can be solved in $2^n\poly(n)$ time (which is faster if the number of sets is exponentially large in $n$). Bj\"orklund et al.~\cite{DBLP:journals/corr/abs-1007-1161} give a randomized algorithm that assumes all sets are of size $q$ and determines whether there exist $p$ pairwise disjoint sets in $O^*(2^{(1-\epsilon)pq})$ time where $\epsilon >0$ depends on $q$.

\subparagraph{Our Main Results}

We investigate what are sufficient structural properties of instances of \setcover, and the closely related \setpartition (in which the picked sets need to be disjoint), problems to be solvable in time significantly faster than the currently known algorithms. We will outline our main results now:

\begin{theorem}\label{thm:largesol}
There is a Monte Carlo algorithm that takes an instance of \setcover on $n$ elements and $m$ sets and an integer $\solsize$ as input and determines whether there exists a set cover of size $s$ in $O(2^{(1-\Omega(\sigma^4))n}m)$ time, where $\sigma=s/n$.
\end{theorem}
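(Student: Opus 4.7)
The plan is to combine the meet-in-the-middle reformulation of \setcover with a randomly sampled DP table, in the spirit of the Howgrave-Graham--Joux representation method, exploiting the many ways in which a putative cover can be split.

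First I would reformulate the decision. Let $\mathcal{A}_k := \{X \subseteq U : X$ can be covered by $k$ sets from $\mathcal{F}\}$, which is downward closed. A size-$s$ set cover exists iff there exists $X \subseteq U$ with $X \in \mathcal{A}_{s/2}$ and $U \setminus X \in \mathcal{A}_{s/2}$: for the forward direction, any balanced split $C = C_1 \uplus C_2$ of a valid size-$s$ cover gives $X := \bigcup_{S \in C_1} S \in \mathcal{A}_{s/2}$, and downward closure applied to $U \setminus X \subseteq \bigcup_{S \in C_2} S$ yields $U \setminus X \in \mathcal{A}_{s/2}$; the reverse direction concatenates two half-covers. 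Crucially, a single valid cover of size $s$ yields $\binom{s}{s/2} = 2^{\Theta(\sigma n)}$ distinct witness midpoints, one per balanced split.

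Next I would sample a random subfamily $\mathcal{R} \subseteq 2^U$ of density $p$ and run a sampled forward DP: put $\mathcal{R}_0 = \{\emptyset\}$, and for each $k = 1, \ldots, s/2$ set $\mathcal{R}_k = \mathcal{R} \cap \{X \cup S : X \in \mathcal{R}_{k-1},\ S \in \mathcal{F}\}$. The total work is $O(|\mathcal{R}|\cdot m)$, matching $O(2^{(1-\Omega(\sigma^4))n}\,m)$ once $p = 2^{-\Omega(\sigma^4) n}$. The algorithm accepts iff there is $X \in \mathcal{R}_{s/2}$ with $U \setminus X \in \mathcal{R}_{s/2}$, which is tested by hashing $\mathcal{R}_{s/2}$. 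Because $\mathcal{R}_k \subseteq \mathcal{A}_k$ by construction, there are no false positives; it remains to show that with constant probability some witness midpoint $X$, together with an ordering of each of $C_1$ and $C_2$, has every prefix union lying in $\mathcal{R}$.

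The main obstacle, and the source of the $\sigma^4$ exponent, is the correctness analysis. Under a na\"ive i.i.d.\ Bernoulli sample, the $(s/2)!$ orderings of $C_1$ give only $((s/2)!)^2 \cdot p^{s}$ expected surviving witness--ordering pairs per cover, yielding polynomial savings at best. To convert the $\binom{s}{s/2}$-fold representation multiplicity into an exponential saving in $n$, I would draw $\mathcal{R}$ from a structured distribution, for instance a random fibre of a random linear map on $\{0,1\}^U$ and a nested restriction at the quarter-level $s/4$, mimicking HGJ's two-level recursion for \subsetsum. A second-moment calculation under this structured distribution should show that the expected number of detected witness--ordering pairs concentrates above $1$ when $p \geq 2^{-c\sigma^4 n}$ for a small constant $c > 0$; the powers of $\sigma$ arise because each recursion level contributes both a combinatorial multiplicity factor and a dilution factor bounded in terms of $\sigma$. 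A constant number of independent repetitions then yields the Monte Carlo guarantee, and the claimed time bound $O(2^{(1-\Omega(\sigma^4))n}\,m)$ follows from $|\mathcal{R}| = O(p \cdot 2^n)$ together with the $O(m)$ per-step cost of the sampled DP.
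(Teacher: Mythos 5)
Your high-level framing matches the paper --- a size-$s$ cover gives $\binom{s}{s/2}$ splits, and one should only evaluate a sampled slice of the DP table --- but the concrete algorithm and the key lemma are different, and the place you leave a gap is exactly where the paper does something else. The paper does \emph{not} sample the DP table at every level. It samples $\cW\subseteq\binom{U}{l}$ (with $l$ close to $n/2$) at rate $2^{-\zeta n}$ with $\zeta=\Theta(\sigma)$ and then computes $c_i(W)$ for $W\in\cW$ \emph{exactly}, via a DP whose state space is the down-closure ${\downarrow}\cW$ (Lemma~\ref{lem:downclosure1}); a witness halve needs only \emph{itself} to be sampled, not its whole prefix chain. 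The running time is then bounded by a separate combinatorial estimate (Lemma~\ref{lem:bndclosure}) showing $|{\downarrow}\cW|\le n\,2^{(1-(\zeta/2)^4)n}$ --- the $\sigma^4$ comes from this down-closure bound, not from a small sampling density. Your algorithm instead requires every intermediate prefix union $X_1\subset\cdots\subset X_{s/2}$ to land in the sampled set $\mathcal{R}$, which, as you correctly observe, kills independent Bernoulli sampling; the proposed rescue (random linear fibre plus a ``nested restriction at the quarter level'') is not worked out and it is unclear it can work: a fixed affine constraint is not preserved under taking unions with elements of $F$, so a prefix chain cannot all lie in a single fibre, and no second-moment estimate is actually supplied. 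The missing ingredient is the replacement of per-level sampling by the down-closure argument.

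Two smaller gaps: the $\binom{s}{s/2}$ midpoints need not be \emph{distinct} for \setcover (different $C_1$'s can have the same union), which is why the paper first reduces to \setpartition so that $C_1\mapsto N(C_1)$ is injective; and the midpoints must be \emph{balanced} (size $n/2\pm\beta n$) for the down-closure bound to help, which fails when large sets are present --- the paper first strips out all sets of size $>\Theta(\sigma^4 n)$ (Observation~\ref{obs:nolinear}) and then uses a Hoeffding argument (Lemma~\ref{lem:largesolutionmanywitnesssplits}) to show that a constant fraction of the $2^s$ subsets of the partition give $|N(S')|$ close to $n/2$. Your proposal addresses neither of these points.
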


We remark that this generalizes the result of Koivisto~\cite{DBLP:conf/iwpec/Koivisto09} in the sense that it solves a larger class of instances in $O^*(2^{(1-\Omega(1))n})$ time: If all set sizes are bounded by a constant $r$, a set partition needs to consist of at least $n/r$ sets and Theorem~\ref{thm:largesol} applies with $\sigma = 1/r$ (although this gives a slower algorithm than Koivisto's in this special case). Moreover, it seems hard to extend the approach of Koivisto to our more general setting.

The second result demonstrates that our techniques are also applicable to \setcover instances with exponentially many sets, a canonical example of which being graph coloring:

\begin{theorem}\label{thm:col}
There is a randomized algorithm that given graph $G$ and integer $\solsize=\sigma n$, in $O^*(2^{(1-\Omega(\sigma^4))n})$ time outputs
\yes with constant probability, if $\chi(G) < \solsize$, and \no, if $\chi(G) > \solsize$.
\end{theorem}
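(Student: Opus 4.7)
The plan is to apply Theorem~\ref{thm:largesol} to the \setcover instance produced by the natural reduction: the universe is $V(G)$ and the set family $\cF$ consists of all independent sets of $G$. Any proper coloring of $G$ partitions $V(G)$ into independent sets, and any cover of $V(G)$ by independent sets can be turned into such a partition by arbitrarily removing excess vertices from each set; hence $\chi(G)\le s$ iff $(V(G),\cF)$ admits a cover of size $s$. The obstacle to directly invoking Theorem~\ref{thm:largesol} is that $|\cF|$ may be as large as $\Theta(3^{n/3})$, so its $O(m)$ running-time factor would swamp the $2^{-\Omega(\sigma^4)n}$ saving whenever $\sigma$ is small.

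To avoid this, I would revisit the algorithm behind Theorem~\ref{thm:largesol} and show that it accesses $\cF$ only through a small number of elementary operations which admit polynomial-time implementations when $\cF$ is the family of independent sets of a graph. I expect these operations to be of the form: on a subset $T\subseteq V(G)$, (i) decide whether some member of $\cF$ lies within $T$; (ii) iterate over, or sample, members of $\cF$ contained in $T$. Item (i) is trivial for independent sets, and item (ii) can be handled by polynomial-delay enumeration of maximal independent sets in $G[T]$, or by folding the sum over $\cF\cap 2^T$ into a Bj\"orklund--Husfeldt--Koivisto-style inclusion-exclusion identity such as $\sum_{S\subseteq T}(-1)^{|T|-|S|} i(S)^j$, where $i(S)$ counts independent sets in $G[S]$ and $j$ is the target cover size. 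Each such query thus costs only polynomial time on top of the sampled-DP skeleton, so the $m$ factor collapses to $\poly(n)$.

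The main obstacle will be to confirm that the probabilistic analysis of Theorem~\ref{thm:largesol} is preserved under this implicit access to $\cF$: whenever the original algorithm relies on a (near-)uniform random choice of a set from $\cF\cap 2^T$, the substitute graph-based sampler must yield a distribution for which the representation-method counting arguments still go through. I would address this either by adapting the analysis to whatever distribution the enumeration subroutine naturally produces, or by using approximate uniform sampling of (maximal) independent sets via self-reducibility of the underlying counting problem. With these modifications in place, the running time becomes $O^*(2^{(1-\Omega(\sigma^4))n})$, and the one-sided error statement of Theorem~\ref{thm:largesol}, applied with the input $s-1$, directly yields the claimed output guarantee: \yes with constant probability when $\chi(G)<s$ and \no when $\chi(G)>s$.
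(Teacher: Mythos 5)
Your high-level instinct — that the bottleneck is the explicit $O(m)$ dependence, and that the algorithm should only need \emph{implicit} access to the independent-set family — is correct and matches the paper's approach, which indeed proves an oracle version of the machinery (Lemma~\ref{lem:downclosure2}, Theorems~\ref{thm:manysets} and~\ref{thm:manysets2}). However, there are two genuine gaps.

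First, and most importantly, you do not address the requirement that all sets in the family be small. The representation-method core (Lemma~\ref{lem:largesolutionmanywitnesssplits}, which feeds Theorems~\ref{thm:main} and~\ref{thm:manysets}) needs $|N(f)|\leq \Theta(\sigma^4 n)$ for every $f$; otherwise the Hoeffding argument does not yield exponentially many balanced witness halves (e.g.\ a partition into three sets of size roughly $n/3$ has no balanced halve at all). In Theorem~\ref{thm:largesol}'s algorithm this is enforced by Observation~\ref{obs:nolinear}, which explicitly iterates over every large set and tests it with a $2^n$-time call — that loop is exactly the kind of $\Theta(m)$-dependence you cannot afford when $m=\Theta(3^{n/3})$, so ``folding it into queries'' does not work. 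The paper instead handles this case \emph{before} invoking the oracle machinery: it checks in $\binom{n}{\sigma^4 n/8}=O^*(2^{(1-\Omega(\sigma^4))n})$ time whether $G$ has an independent set of size $\geq \sigma^4 n/8$; if yes, it removes one such set and colors the rest exactly with the $O^*(2^n)$ Bj\"orklund--Husfeldt--Koivisto algorithm, using one extra color (which is exactly why the statement only promises a gap between $\chi(G)<s$ and $\chi(G)>s$); if no, every independent set is automatically small and Theorem~\ref{thm:manysets2} applies directly. This case split is the actual content of the proof, and it is missing from your proposal.

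Second, the ``main obstacle'' you raise — that the algorithm might rely on sampling random sets from $\cF\cap 2^T$ and that the sampler's distribution could disturb the representation-method counting — is not an issue in the real proof. The only randomness in the sampled-DP skeleton (Algorithm~$\mathtt{A1}$) is over subsets $W\subseteq U$ of the universe, not over members of $\cF$, and the oracle required is a single exact-membership query: given $X\subseteq U$, decide whether $N(f)=X$ for some $f\in F$, which for a graph is simply ``is $X$ an independent set?''. The down-closure evaluation (Lemma~\ref{lem:downclosure2}) then uses zeta/M\"obius transforms over ${\downarrow}\cW$, deterministically. So your worry about approximate uniform sampling and polynomial-delay enumeration of maximal independent sets is not needed, and the probabilistic analysis is unchanged from Theorem~\ref{thm:main}.
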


\subparagraph{Representation method for Set Cover} We feel the main technique used in this paper is equally interesting as the result, and will therefore elaborate on its origin here. Our technique is on a high level inspired by the following simple observation ingeniously used by Howgrave-Graham and Joux~\cite{DBLP:conf/eurocrypt/Howgrave-GrahamJ10}: Suppose $\mathcal{S} \subseteq 2^{[m]}$ is a set of solutions implicitly given and we seek for a solution $X \in \mathcal{S}$ with $|X|=s$ by listing all sets of $\binom{[m]}{s/2}$ and performing pairwise checks to see which two combine to an element of $\mathcal{S}$. Then we can restrict our search in various ways since there will be as many as $\binom{s}{s/2}$ pairs guiding us to $X$. In~\cite{DBLP:conf/eurocrypt/Howgrave-GrahamJ10} and all subsequent works (including~\cite{DBLP:conf/eurocrypt/BeckerCJ11,eurocrypt-2012-24271,DBLP:conf/stacs/AustrinKKN15,DBLP:journals/corr/AustrinKKN15}), this idea was used to speed up `meet-in-the-middle attacks' (also called `birthday attacks'~\cite[Chapter 6]{Joux:2009:AC:1642719}). We will refer to uses of this idea as the `representation method' since it crucially relies on the fact that $X$ has many representations as pairs. To indicate the power of this technique in the context of \setcover and \setpartition we show that without changes it already gives an $O^*(2^{0.3399m})$-time Monte Carlo algorithm for the \setpartition problem with $m$ sets, and even for a more general linear satisfiability problem on $m$ variables. For the latter problem this improves the $O^*(2^{m/2})$ time algorithm based on the meet-in-the-middle attack that was the fastest known before.

At first sight the representation method seemed to be inherently only useful for improving algorithms based on the meet-in-the-middle attack. However, the main conceptual contribution of this work is to show that it is also useful in other settings, or at least for improving the dynamic programming algorithm for the \setcover and \setpartition problems if the solution size is large. On a high level, we show this as follows in the case of \setpartition:\footnote{The algorithm for \setcover actually reduces to \setpartition.} for a subset $W$ of the elements of the \setpartition instance, define $T[W]$ to be the minimum number of disjoint sets needed to cover all elements of $W$. Stated slightly oversimplified, we argue that if a minimal set partition of size $s$ is large, we have that $T[W]+T[[n] \setminus W]=s$ for $\binom{s}{s/2}$ sets $W$ with $|W|$ close to $n/2$. To relate this to later sections, let us remark we refer to such a set $W$ as a \emph{witness halve}. Subsequently, we exploit the presence of many witness halves by using a dynamic programming algorithm that samples a set of the subsets with size close to $n/2$ and only evaluates table entries from this sample plus the table entries required to compute the table entries from the sample.

\subparagraph{Organization}

This paper is organized as follows: In Section~\ref{sec:prel}, we recall preliminaries and introduce notation. In Section~\ref{sec:vars}, we discuss new observations and basic results that we feel are useful for developing a better understanding of the complexity of \setcover with respect to several structural properties of instances. In Section~\ref{sec:manyhalves} we formally present the notion of witness halves and prepare tools for exploiting the existence of many witness halves. In Section~\ref{sec:mainproofs} we prove our main results and in Section~\ref{sec:concremark} we suggest further research.

\section{Preliminaries and Notation}\label{sec:prel}
For a real number $x$, $|x|$ denotes the absolute value of $x$. For a Boolean predicate $p$, we let $[p]$ denote $1$ if $p$ is true and $0$ otherwise. On the other hand, if $p$ is an integer we let $[p]$ denote $\{1,\ldots,p\}$. As usual, $\mathbb{N}$ denotes all positive integers. Running times of algorithms are often stated using $O^*(\cdot)$ notation which suppresses factors polynomial in the input size. To avoid superscript, we sometimes use $\exp(x)$ to denote $e^x$. We denote $\lg$ for the base-$2$ logarithm. If $G=(V,E)$ and $v \in V$ we denote $N(v)= \{w \in V: (v,w)\in E\}$ and for $X \subseteq V$ we extend this notation to $N(X)=\bigcup_{v \in X}N(v)$. For reals $a,b>0$ we let $a \pm b$ denote the interval $[a-b,a+b]$. A false positive (negative) of an algorithm is an instance on which it incorrectly outputs YES (respectively, NO). In this work we call an algorithm Monte Carlo if it has no false positives and if any instance is a false negative with probability at most $1/4$. We denote vectors with boldface for clarity. For a real number $x \in [0,1]$, $h(x)=  -x \lg x - (1 - x) \lg (1 - x)$ denotes the binary entropy of $x$, where $0 \lg 0$ should be thought of as $0$. It is well known that $\binom{b}{a} \leq 2^{h(a/b)b}$ (and this can for example be proved using Stirling's approximation). It is easy to see from the definition that $h(\cdot)$ is symmetric in the sense that $h(x)=h(1-x)$. 
\begin{lemma}
The following can be verified using standard calculus:
  \label{fact:bounds}
	\begin{enumerate}
		\item $h(1/2 - x) = h(1/2+x) \leq 1 - x^2$ for all $x \in (0,1/2)$,
		\item $h(x) \leq x \lg( 4/x )$ for all $x \in (0,1)$,
		\item $(1-1/n)^n \leq 1/e$.
	\end{enumerate}
\end{lemma}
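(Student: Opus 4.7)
The plan is to treat the three inequalities as independent calculus exercises, as the lemma itself advertises.

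For part (1), the equality $h(1/2-x)=h(1/2+x)$ is immediate from the already noted identity $h(y)=h(1-y)$. For the upper bound I would work with the second derivative of $h$. Writing $h$ in natural-log form gives $h''(y) = -1/(y(1-y)\ln 2)$, and since $y(1-y)\le 1/4$ for $y\in(0,1)$, we get the uniform bound $h''(y) \le -4/\ln 2$. Combined with $h(1/2)=1$ and $h'(1/2)=0$, the Taylor expansion with integral remainder at $y=1/2$ yields $h(1/2+x) \le 1 - (2/\ln 2)\,x^2 \le 1-x^2$, since $2/\ln 2 > 1$.

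For part (2), I would rewrite the right-hand side as $x\lg(4/x) = 2x + x\lg(1/x)$ and the definition of $h$ as $h(x) = x\lg(1/x) + (1-x)\lg(1/(1-x))$, so the claim reduces to $(1-x)\lg(1/(1-x)) \le 2x$. Applying $\ln(1+u)\le u$ with $u = x/(1-x)$ gives $\ln(1/(1-x)) \le x/(1-x)$, and dividing by $\ln 2$ and multiplying by $(1-x)$ yields $(1-x)\lg(1/(1-x)) \le x/\ln 2 < 2x$, as desired.

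For part (3), the standard one-line proof suffices: applying $1+t\le e^t$ with $t=-1/n$ and raising both sides to the $n$-th power gives $(1-1/n)^n \le e^{-1}$.

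None of the three parts presents a real obstacle; the only subtlety is in (1), where one must notice the uniform bound on $h''$ on all of $(0,1)$ (not just at the midpoint) to turn the Taylor expansion into a genuine global inequality. The remaining parts are essentially single applications of the inequality $\ln(1+u)\le u$.
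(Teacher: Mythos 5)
Your proposal is correct. The paper itself gives no proof of this lemma — it simply asserts the three bounds are verifiable by standard calculus — so your argument fills in what the authors left to the reader, and does so cleanly: the symmetry identity plus a Taylor expansion at $1/2$ with the uniform bound $h''(y)\leq -4/\ln 2$ for item~(1), the reduction to $(1-x)\lg\frac{1}{1-x}\leq 2x$ via $\ln(1+u)\leq u$ for item~(2), and $1+t\leq e^t$ for item~(3) are all exactly the sort of verification the paper's phrasing invites.
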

\begin{lemma}[Hoeffding bound~\cite{hoeffding}]\label{lem:hoeff}
If $X_1,\ldots,X_s$ are independent, $Y = \sum_{i=1}^s X_i$ and $a_i \leq X_i \leq b_i$ for $i=1,\ldots,s$ then
$\Pr[|Y- \mathbb{E}[Y]| \geq t] \leq 2 \cdot \exp\left(\frac{-2t^2}{\sum_{i=1}^s (b_i-a_i)^2} \right)$.
\end{lemma}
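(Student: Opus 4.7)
The plan is to use the standard Cram\'er--Chernoff exponential tilting method. First I would apply Markov's inequality to the exponential: for any $\lambda>0$,
$\Pr[Y-\mathbb{E}[Y] \geq t] \leq e^{-\lambda t}\,\mathbb{E}[e^{\lambda(Y-\mathbb{E}[Y])}]$. By independence of the $X_i$, the moment generating function factorises as $\mathbb{E}[e^{\lambda(Y-\mathbb{E}[Y])}] = \prod_{i=1}^s \mathbb{E}[e^{\lambda(X_i-\mathbb{E}[X_i])}]$, so it suffices to bound each factor separately.

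The second and main step is to establish Hoeffding's lemma: if $Z$ has mean zero and $Z \in [a,b]$, then $\mathbb{E}[e^{\lambda Z}] \leq \exp(\lambda^2(b-a)^2/8)$. I would start from the convexity bound $e^{\lambda z} \leq \frac{b-z}{b-a}e^{\lambda a} + \frac{z-a}{b-a}e^{\lambda b}$ valid on $[a,b]$, take expectations, and, after setting $p = -a/(b-a)$, reduce the lemma to showing $\phi(u) := -pu + \ln(1-p+pe^u) \leq u^2/8$ for all real $u$. A direct calculation gives $\phi(0)=\phi'(0)=0$ and $\phi''(u) = \frac{p(1-p)e^u}{(1-p+pe^u)^2}$; an AM--GM estimate on the denominator yields $\phi''(u) \leq 1/4$, and Taylor's theorem with integral remainder then gives $\phi(u) \leq u^2/8$. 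Applying this to each centred variable $X_i - \mathbb{E}[X_i]$, whose range still has length $b_i-a_i$, yields $\mathbb{E}[e^{\lambda(Y-\mathbb{E}[Y])}] \leq \exp\bigl(\lambda^2 \sum_{i=1}^s (b_i-a_i)^2 / 8\bigr)$.

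Third, I would optimise over $\lambda$: substituting back into Markov gives an upper bound of $\exp\bigl(-\lambda t + \lambda^2 \sum_i (b_i-a_i)^2/8\bigr)$, and differentiating in $\lambda$ shows the minimiser is $\lambda^\star = 4t / \sum_i (b_i-a_i)^2$, which produces the one-sided tail $\Pr[Y-\mathbb{E}[Y] \geq t] \leq \exp\bigl(-2t^2 / \sum_i (b_i-a_i)^2\bigr)$. Running the same argument with $-X_i$ in place of $X_i$ (the interval length is unchanged) bounds the lower tail identically, and a union bound over the two tails accounts for the factor of $2$ in the stated two-sided inequality.

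The main obstacle is the tight constant in Hoeffding's lemma: a careless estimate of $\phi''$ gives $1/2$ and hence an exponent of only $-t^2/\sum_i (b_i-a_i)^2$, losing the factor of $2$. The AM--GM bound $(1-p+pe^u)^2 \geq 4(1-p)\,pe^u$ is what restores $\phi''(u)\leq 1/4$; once that analytic input is in hand, the rest of the proof is a mechanical consequence of exponential tilting, independence, and one-parameter optimisation.
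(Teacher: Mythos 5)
The paper does not prove this lemma; it is stated as a citation to Hoeffding's original 1963 paper and used as a black box. Your proof is correct and is essentially Hoeffding's own argument: Chernoff's exponential-Markov bound, factorisation of the moment generating function by independence, Hoeffding's lemma for bounded mean-zero variables (via the convexity interpolation of $e^{\lambda z}$, the function $\phi(u) = -pu + \ln(1-p+pe^u)$, the AM--GM bound $\phi'' \le 1/4$, and Taylor with remainder), a one-parameter optimisation yielding $\lambda^\star = 4t/\sum_i(b_i-a_i)^2$, and a union bound over the two tails. All the constants check out, and you correctly identify that the sharp $1/4$ bound on $\phi''$ is what yields the factor $-2$ in the exponent rather than $-1$. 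There is no gap.
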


\subparagraph{Set Cover / Set Partition}
In the \setcover problem we are given a bipartite graph $G=(F \dot{\cup} U, E)$ (where $F$ and $U$ shorthand `Family' and `Universe' respectively), together with an integer $\solsize$ and the task is to determine whether there exists a \emph{solution} $S \subseteq F$ such that $N(S)=U$ and $|S| \leq \solsize$. In the \setpartition problem we are given the same input as in the \setcover problem, but we are set to determine whether there exists $S \subseteq F$ with $N(S)=U$, $|S| = \solsize$ and additionally $N(f) \cap N(f')=\emptyset$ for every $f,f' \in S$ with $f\neq f'$. We will refer to solutions of both problems as set covers and set partitions.

Throughout this paper, we let $n,m$ respectively denote $|U|$ and $|F|$, and refer to instances of \setcover or \setpartition as $(n,m,\solsize)$-instances to quantify their parameters.
Since this work concerns \setcover or \setpartition with large solutions we record the following basic observation that follows by constructing for each\footnote{For \setpartition only do this for $c$-tuples $(f_1,\ldots,f_c)$ with $N(f_i)$ disjoint.} $c$-tuple $t=(f_1,\ldots,f_c) \in F^c$ of sets in the original instance a set $f^t$ with $N(f^t)=\bigcup_{i=1}^t f_i$ in the output instance:
\begin{observation}[\cite{DBLP:conf/coco/CyganDLMNOPSW12}]\label{obs:redsolsize}
There is a polynomial time algorithm that takes a constant $c \geq 1$ dividing $\solsize$, and a $(n,m,\solsize)$-instance of \setcover (resp. \setpartition) as input and outputs an equivalent $(n,m^c,\solsize/c)$-instance of \setcover (resp. \setpartition).
\end{observation}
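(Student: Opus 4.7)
The plan is to follow the hint in the footnote: given an $(n,m,s)$-instance with family $F$, we construct the new instance on the same universe $U$ whose family $F'$ contains, for every $c$-tuple $t = (f_1,\ldots,f_c) \in F^c$, a set $f^t$ with neighborhood $N(f^t) = \bigcup_{i=1}^c N(f_i)$. For \setpartition we further restrict to tuples where the $N(f_i)$ are pairwise disjoint. Since $c$ is a constant, $|F'| \leq m^c$ and building $F'$ takes polynomial time. The new target size is $s' = s/c$, which is an integer by the assumption $c \mid s$.

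The correctness argument is a straightforward bijection-style check in both directions. For \setcover, if $S = \{g_1,\ldots,g_s\} \subseteq F$ is a set cover of size $s$ in the original instance, arbitrarily partition $S$ into $s/c$ blocks of size $c$; each block gives a single element of $F'$, and the union of the corresponding $s/c$ neighborhoods still equals $U$. Conversely, a cover $S' \subseteq F'$ of size $s/c$ unpacks to at most $s$ original sets in $F$ whose union covers $U$ (we may add arbitrary dummy sets to reach size exactly $s$ if the problem demands equality). For \setpartition, start from a partition $S$ of size $s$, again split into $s/c$ blocks of $c$ sets; within each block the $N(f_i)$ are disjoint (because the whole of $S$ is a partition), so the block indexes a valid element of $F'$, and the resulting $s/c$ new sets are pairwise disjoint and cover $U$. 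In the reverse direction, any element of $F'$ by construction unpacks into $c$ pairwise-disjoint original sets, and a solution of size $s/c$ in the new instance unpacks to exactly $s$ pairwise-disjoint sets covering $U$ in the original.

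There is no real obstacle; the only mild subtlety is making sure the \setpartition reduction restricts $F'$ to \emph{disjoint} $c$-tuples, so that an element $f^t \in F'$ automatically certifies the $c$ underlying sets form part of a valid partition, and that $c \mid s$ is used to guarantee the block decomposition has exactly $s/c$ blocks of size $c$.
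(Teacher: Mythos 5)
Your construction is exactly the one the paper sketches in the sentence and footnote preceding the observation (the $c$-tuple blow-up $f^t$ with $N(f^t)=\bigcup_i N(f_i)$, restricted to pairwise-disjoint tuples for \setpartition), and your block-decomposition correctness check is the standard verification the paper leaves implicit. The proposal is correct and takes essentially the same approach.
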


Often it will be useful dispense with linear sized sets. To this end, the following can be achieved by simply iterating over all $f \in F$ with $|N(f)| \geq \epsilon n$ and checking for each such set whether there is a solution containing it using the $2^{n}\poly(n)$ algorithm for \setcover~\cite{DBLP:journals/siamcomp/BjorklundHK09}.
\begin{observation}\label{obs:nolinear}
There is an algorithm that, given a real number $\epsilon >0$, takes an $(n,m,\solsize)$-instance of \setcover as input and outputs an equivalent $(n,m',\solsize)$-instance with $m' \leq m$ satisfying $|N(f)| \leq \epsilon n$ for every $f \in F$. The algorithm runs in $O(m2^{(1-\epsilon)n}\poly(n))$ time.
\end{observation}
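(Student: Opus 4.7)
The plan is to follow the natural reduction: try to eliminate each ``big'' set $f \in F$ with $|N(f)| \geq \epsilon n$ one at a time, using the $2^n\poly(n)$-time \setcover algorithm of Bj\"orklund et al.~\cite{DBLP:journals/siamcomp/BjorklundHK09} as a black box on a smaller instance.

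Concretely, I iterate over the sets $f \in F$ with $|N(f)| \geq \epsilon n$. For each such $f$, I decide whether there is a set cover of size $\solsize$ containing $f$ by running the Bj\"orklund et al.\ algorithm on the residual \setcover instance with universe $U' := U \setminus N(f)$, family $F' := \{f' \in F \setminus \{f\}\}$ with $N_{G'}(f') := N(f') \setminus N(f)$, and target size $\solsize - 1$. Since $|U'| \leq (1-\epsilon)n$, this single call costs $2^{(1-\epsilon)n}\poly(n)$ time. If the call returns YES, then the original instance is a YES-instance and I can emit a canonical equivalent YES-instance with only small sets (e.g., an arbitrary set cover of size $s$ of $U$ where every big set $f_i$ of the witnessing cover is split into $\lceil |N(f_i)|/(\epsilon n)\rceil$ sets each of size $\leq \epsilon n$, padded by irrelevant singleton sets to reach exactly $\solsize$ sets; this is trivially decidable as YES). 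If the call returns NO, then no minimum cover uses $f$, and I safely delete $f$ from $F$.

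After processing every big set, the remaining family $F$ contains only sets with $|N(f)| \leq \epsilon n$, while $m' \leq m$ and the remaining instance is equivalent to the original (since in every iteration the deleted set was shown to be dispensable). The total running time is bounded by the number of big sets, at most $m$, times the cost of one black-box call, giving $O(m \cdot 2^{(1-\epsilon)n}\poly(n))$ as claimed.

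There is no real obstacle; the only subtlety is the bookkeeping for the YES-case output: one must verify that the target $\solsize$ and the universe size $n$ can still be matched by a trivially equivalent small-set instance, which is routine. Everything else is a direct invocation of the $2^n\poly(n)$ \setcover algorithm on a universe of size $(1-\epsilon)n$.
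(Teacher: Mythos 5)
Your proposal is correct and takes essentially the same approach the paper uses: iterate over the large sets, and for each one test whether some size-$s$ cover contains it by running the Bj\"orklund et al.\ $2^n\poly(n)$ algorithm on the residual universe $U \setminus N(f)$ of size at most $(1-\epsilon)n$, deleting the set if the answer is NO and short-circuiting to a trivial YES-instance otherwise. You also correctly filled in the details the paper leaves implicit (the size of the residual universe, and the bookkeeping for the YES case).
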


As we will see in Theorem~\ref{thm:subtle}, it makes a difference in the \setpartition problem whether empty sets are allowed since we need to find a set partition of size exactly $s$. To exclude such sets, we will simply say that an instance is `without empty sets'.

\section{Observations and Basic Results on Set Cover and Set Partition.}\label{sec:vars}

To improve our understanding of which properties of instances of \setcover and \setpartition allow faster algorithms, and which techniques are useful for obtaining such faster algorithms, we will record some observations and basic results in this section. To stress that the proof techniques in this section are \emph{not our main technical contribution}, we postpone all proofs to Appendix~\ref{app:A}.

We prefer to state our results in terms of \setcover because it is slightly more natural and common, but since \setpartition often is easier to deal with for our purposes we will sometimes use the following easy reduction, all of whose steps are contained in~\cite{DBLP:conf/coco/CyganDLMNOPSW12}:
\begin{theorem}\label{thm:redcovpart}
There is an algorithm that, given a real $0 <\epsilon < 1/2$, takes an $(n,m,\solsize)$-instance of \setcover as input and outputs an equivalent $(n,m',\solsize)$-instance of \setpartition with $m' \leq m2^{\epsilon n}$ sets in time $O(m2^{(1-\epsilon)n})$.
\end{theorem}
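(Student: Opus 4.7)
The plan is to combine the brute-force elimination of large sets via Observation~\ref{obs:nolinear} with the standard ``subset expansion'' trick. I would first apply Observation~\ref{obs:nolinear} with parameter $\epsilon$: in $O(m2^{(1-\epsilon)n}\poly(n))$ time this produces an equivalent \setcover instance in which every remaining set $f$ satisfies $|N(f)|\le \epsilon n$.

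Next, for each remaining $f\in F$ and each subset $X\subseteq N(f)$, I would introduce a fresh set $f^X$ with $N(f^X)=X$ into the \setpartition instance. Each $f$ contributes at most $2^{|N(f)|}\le 2^{\epsilon n}$ new sets, so the total is $m'\le m\cdot 2^{\epsilon n}$, and the expansion itself takes $O(m\cdot 2^{\epsilon n})$ time, which is dominated by the preprocessing because $\epsilon<1/2$.

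For correctness, any \setcover $\{f_1,\dots,f_k\}$ with $k\le \solsize$ can be trimmed to pairwise disjoint $X_1,\dots,X_k$ with $X_i\subseteq N(f_i)$ and $\bigsqcup_i X_i = U$, giving a \setpartition of size $k$ via $\{f_i^{X_i}\}$. To reach exactly $\solsize$ parts I would repeatedly split some $X_i$ of size $\ge 2$ into $Y\sqcup Y'$ and replace $f_i^{X_i}$ by the two distinct sets $f_i^{Y}, f_i^{Y'}$; in the relevant regime $\solsize\le n$ any part count between $k$ and $n$ is reachable this way. Conversely, any \setpartition solution projects via the $f$-coordinate to a \setcover of size $\le \solsize$. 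The one subtle point I would need to defend is that $(f_i,Y)$ and $(f_i,Y')$ with $Y\ne Y'$ must be treated as distinct elements of the new family, which the \setpartition definition used in this paper permits since two distinct elements of $F$ may share an underlying neighborhood.
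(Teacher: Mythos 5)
Your proposal is correct and follows the same approach as the paper's proof: apply Observation~\ref{obs:nolinear} to cap every set size at $\epsilon n$, then form the \setpartition family by adding $f^X$ for every remaining $f$ and every $X\subseteq N(f)$. You additionally spell out the equivalence argument (trimming a cover to a partition and splitting parts to reach size exactly $\solsize$ when $\solsize\le n$) that the paper's appendix compresses to ``it is easy to see.''
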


For completeness, we show that in fact \setcover and \setpartition are equivalent with respect to being solvable in time $O^*(2^{(1-\Omega(1))n})$. This was never stated in print to the best of our knowledge, but the proof uses standard ideas and is found in Appendix~\ref{app:covparteq}.
\begin{theorem}\label{thm:covparteq}
For some $\epsilon >0$ there is an $O^*(2^{(1-\epsilon)n})$ time algorithm for \setcover if and only if for some $\epsilon' >0$ there is an $O^*(2^{(1-\epsilon')n})$ time algorithm for \setpartition.
\end{theorem}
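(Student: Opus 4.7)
The statement is an equivalence, so I would prove the two implications separately.

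The easy direction $(\Leftarrow)$ follows essentially from Theorem~\ref{thm:redcovpart}. Given an $O^*(2^{(1-\epsilon')n})$-time algorithm for \setpartition and a \setcover instance with parameters $(n, m, s)$, I would apply the reduction of Theorem~\ref{thm:redcovpart} with parameter $\epsilon = \epsilon'/2$ to obtain an equivalent \setpartition instance on the same $n$ elements with at most $m \cdot 2^{\epsilon' n / 2}$ sets, in time $O(m \cdot 2^{(1-\epsilon'/2)n})$. Running the assumed Set Partition algorithm on the resulting instance takes time $\poly(n, m \cdot 2^{\epsilon' n / 2}) \cdot 2^{(1-\epsilon')n} = O^*(2^{(1-\epsilon'/2)n})$, so the composed algorithm gives the desired speedup for Set Cover with $\epsilon = \epsilon'/2$.

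The harder direction $(\Rightarrow)$ requires reducing \setpartition to \setcover with only a sub-linear blowup in the universe. Given an $O^*(2^{(1-\epsilon)n})$-time algorithm for \setcover, my plan is to transform a Set Partition instance $(F,U,s)$ into an equivalent Set Cover instance on an extended universe $U^{*} = U \dot\cup V$ with $|V| = o(n)$ and an augmented family $F^{*}$, such that Set Partitions of $U$ of size $s$ correspond (with high probability, if randomization is used) to Set Covers of $U^{*}$ of size $s$ in the modified instance. The auxiliary elements of $V$ act as ``disjointness certificates'': they are designed to be simultaneously coverable only when the chosen original sets are pairwise disjoint on $U$. Running the fast Set Cover algorithm on $(F^{*}, U^{*}, s)$ then costs $O^*(2^{(1-\epsilon)(n+o(n))}) = O^*(2^{(1-\epsilon+o(1))n})$, which absorbs into $O^*(2^{(1-\epsilon')n})$ for any $\epsilon' < \epsilon$.

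The main obstacle is the design of $V$ and $F^{*}$: a naive encoding (one element per pair of intersecting sets in $F$) already blows $|V|$ up to $\Omega(m^2)$, which is far too large. A more promising route I would explore is a randomized one -- for instance, pick a random hash $h: U \to [k]$ for a small $k$, add one auxiliary element per hash bucket, and augment each $f \in F$ with the hash-bucket elements corresponding to $N(f)$ -- together with a success-amplification argument. Before invoking such a reduction I would also use Observation~\ref{obs:nolinear} (to dispose of sets of size at least, say, $\epsilon n/2$, which the $O^{*}(2^{(1-\epsilon)n})$ Set Cover algorithm itself handles in the allotted time) and Observation~\ref{obs:redsolsize} (to bring $s$ into a convenient regime). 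The proof would then conclude by checking correctness of the encoding and that the resulting universe size stays $n + o(n)$ so that the exponent degrades only by $o(1)$.
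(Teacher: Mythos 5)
Your backward ($\Leftarrow$) direction is essentially the paper's, but you should be a bit more careful: the suppressed polynomial factor in $O^*$ may contribute $m'^c$ for some constant $c$, and since the reduction produces $m' = m\,2^{\epsilon n}$ sets, a fixed choice $\epsilon = \epsilon'/2$ only works if $c$ is small. The paper instead chooses $\epsilon = \epsilon'/(2c)$ where $c$ is the degree of the polynomial dependence on the number of sets, which is the safe way to absorb that blow-up.

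The forward ($\Rightarrow$) direction is where your proposal has a genuine gap. You correctly identify that the crux is encoding disjointness with only $o(n)$ extra universe elements, and you correctly reject the pairwise-intersection gadget as too large; but the randomized-hashing idea you sketch has no visible mechanism that actually enforces disjointness. Two intersecting sets can perfectly well cover a collection of hash-bucket elements, so nothing about buckets certifies that the chosen sets partition $U$, and you have not specified how to argue correctness or amplify success. The paper does something quite different and fully combinatorial: it first uses Observation~\ref{obs:redsolsize} with $c = 2/\epsilon$ to shrink the solution size to $s' = s\epsilon/2 \le \epsilon n/2$, then enumerates all $2^{O(\sqrt{n})}$ unordered integer partitions $a_1 + \cdots + a_{s'} = n$ (Hardy--Ramanujan), and for each such partition builds a \setcover instance on $n + s'$ elements by adding fresh elements $e_1, \dots, e_{s'}$ and, for every $f \in F$ with $|N(f)| = a_i$, a copy $f'$ with $N(f') = N(f) \cup \{e_i\}$. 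A size-$s'$ cover must then pick one set per $e_i$, hence one set of each prescribed size, so the chosen sets cover a universe of size $n$ while having sizes summing to exactly $n$, which forces pairwise disjointness. Since $s' \le \epsilon n/2$ and the integer-partition enumeration contributes only $2^{O(\sqrt{n})}$, the resulting time is $O^*(2^{(1-\epsilon)(n+s')}) = O^*(2^{(1-\epsilon/2)n})$. The ``size-budget plus one marker element per size class'' trick is the key idea your proposal is missing.
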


The following natural result is a rather direct consequence of a paper by Koivisto~\cite{DBLP:conf/iwpec/Koivisto09}. It reveals some more similarity with the \textsc{$k$-CNF-Sat} problem: Koivisto shows\footnote{Koivisto only showed this for \setpartition, but the straightforward reductions in this section carry this result over to \setcover.} that for maximum set size $r$, \setcover can be solved in $O^*(2^{(1-\Omega(\frac{1}{r}))n})$ which is analogous to \textsc{$k$-CNF-Sat} being in $O^*(2^{(1-\Omega(\frac{1}{k}))n})$ time~\cite{DBLP:journals/algorithmica/Schoning02,Dantsin200269,DBLP:conf/soda/ChanW16}, and similarly the following result is the counterpart of $O^*(2^{(1-\Omega(\frac{1}{\delta}))n})$-time algorithms for CNF-formula's of density $\delta$ (i.e. at most $\delta n$ clauses)~\cite{DBLP:conf/coco/CalabroIP06,DBLP:journals/corr/ImpagliazzoLPS14}. Again, this result was never explicitly stated in print to the best of our knowledge, and therefore is proved in Appendix~\ref{app:fewsets}.

\begin{theorem}\label{thm:fewsets}
There is an algorithm solving $(n,m,\solsize)$-instances of \setcover or \setpartition in time $m\cdot\mathsf{poly}(n)2^{n-\frac{n}{O(\lg(m/n))}}$.
\end{theorem}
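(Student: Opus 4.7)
The plan is to combine two ingredients: Observation~\ref{obs:nolinear}, which eliminates linear-sized sets using the $O^*(2^n)$-time algorithm of Bj\"orklund et al.~\cite{DBLP:journals/siamcomp/BjorklundHK09}, and Koivisto's~\cite{DBLP:conf/iwpec/Koivisto09} $O^*(2^{(1-1/O(r))n})$-time algorithm for \setcover when every set has size at most $r$. Set $r := cn/\lg(m/n)$ for a small constant $c>0$ and invoke Observation~\ref{obs:nolinear} with $\epsilon = r/n$; this takes time $O(m\poly(n)2^{(1-\epsilon)n}) = O(m\poly(n)2^{n-cn/\lg(m/n)})$, which already matches the target bound. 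It either terminates the algorithm with a \yes answer (if some eliminated large set participates in a solution) or returns an equivalent instance in which every set has size at most $r$.

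On the residual small-set instance I would then run Koivisto's algorithm, first reducing \setcover to \setpartition via Theorem~\ref{thm:redcovpart} with a tiny parameter $\epsilon'$ that blows up the set count by only a $2^{o(n)}$ factor. For \setpartition the cap $r$ on set sizes immediately implies that solutions contain at least $n/r = \lg(m/n)/c$ sets, the structural lower bound exploited inside Koivisto's analysis. The \setcover case follows from the \setpartition case by Theorem~\ref{thm:covparteq}, but the direct reduction above is cleaner.

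The main obstacle I anticipate is parameter balancing: with $r = \Theta(n/\lg(m/n))$, Koivisto's bound degrades to $2^{n-O(\lg(m/n))}$, which is substantially weaker than the target $2^{n - n/O(\lg(m/n))}$, so one cannot simply stack the two black boxes. The natural remedy is to instead target a much smaller set-size threshold $r = \Theta(\lg(m/n))$ in the Koivisto step (for which its savings $n/O(r)$ matches our target exactly) and to handle the intermediate-sized sets --- those whose size lies between $\lg(m/n)$ and the first threshold --- by a separate argument, such as branching or a geometrically decreasing iteration of Observation~\ref{obs:nolinear}, each invocation of which stays within budget. Getting this multi-scale decomposition to telescope correctly against the target time is where I expect the bulk of the proof effort to lie.
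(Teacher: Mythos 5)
Your diagnosis of the obstacle is accurate, and so is your intuition that branching is the remedy, but the proposal stops exactly where the proof actually lives. You correctly observe that the two natural threshold choices both fail: $r = \Theta(n/\lg(m/n))$ makes Observation~\ref{obs:nolinear} cheap but leaves Koivisto's savings an additive $O(\lg(m/n))$ rather than $\Omega(n/\lg(m/n))$; $r = \Theta(\lg(m/n))$ gives the right Koivisto exponent but then Observation~\ref{obs:nolinear} with $\epsilon = \Theta(\lg(m/n)/n)$ costs $m\poly(n)2^{n-\Theta(\lg(m/n))}$, far over budget. The ``separate argument'' you defer to for the intermediate-size sets is the entire content of the theorem, and neither of your two suggestions for it is worked out; the ``geometrically decreasing iteration of Observation~\ref{obs:nolinear}'' variant in particular does not obviously close, since the scales with small $\epsilon$ individually already exceed the target time.

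The paper's proof abandons Observation~\ref{obs:nolinear} altogether and does a single-scale branching. With threshold $r := \lceil 4\lg(m/n)\rceil$, while some $f$ has $|N(f)| \geq r$, branch on whether $f$ is used: the ``use'' branch deletes $N(f)$ from $U$, shrinking the universe by at least $r$; the ``discard'' branch deletes $f$ from $F$. At a leaf all sets have size below $r$ and Koivisto runs in $m\poly(n)2^{\lambda_r n'}$ time on the residual universe of size $n'$. The key counting step is then: any root-to-leaf path has at most $n/r$ ``use'' branches, so at most $\binom{m}{i}$ leaves are reached via exactly $i$ of them, and each such leaf has $n' \leq n - ir$. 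Summing,
\[
  m\poly(n)\sum_{i\ge 0}\binom{m}{i}2^{\lambda_r(n-ir)}
  = m\poly(n)\,2^{\lambda_r n}\bigl(1 + 2^{-\lambda_r r}\bigr)^m
  \leq m\poly(n)\Bigl(2^{\lambda_r}\exp\bigl(\mu/2^{r/2}\bigr)\Bigr)^n,
\]
where $\mu = m/n$, and for $r = \lceil 4\lg\mu\rceil$ the term $\lg(e)/\mu$ is dominated by $1/O(\lg\mu) = 1 - \lambda_r$, giving $m\poly(n)2^{n - n/O(\lg(m/n))}$. This binomial-theorem bound on the branching tree is the missing step in your write-up; without it you have an outline of the right shape but not a proof.
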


Relevant to our work is the following subtlety on solution sizes in \setpartition. It shows that for \setpartition with empty sets, finding large solutions is as hard as the general case. The proof is postponed to Appendix~{\ref{app:subtle}}.

\begin{theorem}\label{thm:subtle}
Suppose there exist $0<\epsilon_1,\epsilon_2 <1/2$ and an algorithm solving $(n,m,\epsilon_1n)$-instances of \setpartition in time $O^*(2^{(1-\epsilon_2)n})$. Then there exists an $O^*(2^{(1-\epsilon_2/2)n})$-time algorithm for \setpartition.
\end{theorem}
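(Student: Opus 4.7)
The plan is to reduce any $(n,m,s)$-instance of \setpartition to an instance of the form $(n',m',\epsilon_1 n')$ on which the assumed $O^*(2^{(1-\epsilon_2)n'})$-time algorithm can be invoked as a black box, and to split on whether $s \leq \epsilon_1 n$ or $s > \epsilon_1 n$, using a different reduction in each case.

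For $s \leq \epsilon_1 n$, I would pad the universe with $k = (\epsilon_1 n - s)/(1-\epsilon_1)$ fresh elements, each covered by its own \emph{forced} singleton set. Since every partition of the padded instance must include all $k$ singletons (they are the only sets covering the new elements), the padded $(n+k, m+k, \epsilon_1(n+k))$-instance admits a size-$\epsilon_1(n+k)$ partition if and only if the original admits a size-exactly-$s$ partition; this bijection avoids the range-only detection that would result from padding with empty sets, so a single call to the assumed algorithm settles the question. For $s > \epsilon_1 n$, I would first shrink the target via Observation~\ref{obs:redsolsize} with the smallest constant $c \leq \lceil 1/\epsilon_1 \rceil$ that divides $s$ and satisfies $s/c \leq \epsilon_1 n$, producing an equivalent $(n, m^c, s/c)$-instance that falls into the first case. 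Divisibility failures ($c \nmid s$) are handled by guessing the $s \bmod c < c$ ``remainder'' sets of the sought solution at polynomial overhead $O(m^{c-1})$, removing them from the instance so that the residual target is divisible by $c$.

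The main obstacle is that the padded universe $n' = (n-s)/(1-\epsilon_1)$ in the first case can be as large as $n/(1-\epsilon_1)$, which approaches $2n$ as $\epsilon_1 \to 1/2$ and naively pushes the exponent above $(1-\epsilon_2)n$. The $\epsilon_2/2$ slack in the claimed exponent is precisely the room needed to absorb this blow-up: I would preprocess with Observation~\ref{obs:nolinear} at parameter $\epsilon = \epsilon_2/2$, which costs $O^*(2^{(1-\epsilon_2/2)n})$ time (matching the target) and leaves every remaining set of size at most $\epsilon_2 n/2$, and then in the regime where $\epsilon_1$ is close to $1/2$ I would use denser forced gadgets (sets of size up to $\lfloor 1/\epsilon_1 \rfloor$) instead of singletons to trade universe elements for sets more favorably. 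Combining the two cases with this preprocessing yields the claimed $O^*(2^{(1-\epsilon_2/2)n})$-time algorithm.
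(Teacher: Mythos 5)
Your singleton-padding reduction is logically correct — the $k$ forced singletons do produce a bijection between size-$s$ partitions of the original instance and size-$\epsilon_1(n+k)$ partitions of the padded instance — but the running-time analysis has a genuine gap, and the two repairs you propose do not close it.

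The problem is that your gadget ties universe growth to $\epsilon_1$ rather than to $\epsilon_2$. The padded universe has size $n' = n+k = (n-s)/(1-\epsilon_1)$, a multiplicative blow-up by $1/(1-\epsilon_1)$. To stay within $O^*(2^{(1-\epsilon_2/2)n})$ you need $(1-\epsilon_2)n' \le (1-\epsilon_2/2)n$, i.e. $n'/n \le (1-\epsilon_2/2)/(1-\epsilon_2) = 1+\Theta(\epsilon_2)$, but for, say, $\epsilon_1 = 0.4$ and $s = \epsilon_1 n/2$ (a value your Case~2 reduction routinely produces after applying Observation~\ref{obs:redsolsize} with the smallest admissible $c$) you already have $n'/n = 0.8/0.6 \approx 1.33$, which exceeds the budget for every small $\epsilon_2$. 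Your first repair, preprocessing via Observation~\ref{obs:nolinear} at $\epsilon=\epsilon_2/2$, only restricts set sizes — it neither shrinks the universe nor the target, so it does not touch $n'$. Your second repair, using forced gadgets of size $r>1$, is strictly counterproductive: a size-$r$ gadget contributes $1$ to the target but $r$ to the universe, so its ``ratio'' $1/r < 1$ dilutes more slowly than a singleton's, and the amount of padding becomes $(\epsilon_1 n - s)/(1/r - \epsilon_1)$, which is \emph{larger} than the singleton count (and undefined once $r \ge 1/\epsilon_1$).

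The paper avoids the blow-up by a gadget that decouples universe growth from target adjustment, which is precisely the role of the empty sets that the preceding paragraph of the paper flags. It first applies Observation~\ref{obs:redsolsize} (regardless of whether $s$ is above or below $\epsilon_1 n$) to bring the solution size down to $s' \le \min\{\epsilon_1, \epsilon_2/2\}\,n$ — the bound $\epsilon_2/2$ is the crucial one. It then adds only $s'$ fresh elements $e_1,\dots,e_{s'}$ and replaces each set $f$ by $s'$ copies $f_i$ with $N(f_i) = N(f)\cup\{e_i\}$, which forces \emph{every} set partition of the new instance to consist of exactly $s'$ non-empty sets. Finally it pads $F$ with empty sets, which raises the achievable target to any value $\ge s'$, in particular to $\epsilon_1(n+s')$, without adding a single further universe element. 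The universe therefore grows only to $n+s' \le (1+\epsilon_2/2)n$, and $(1-\epsilon_2)(1+\epsilon_2/2) \le 1-\epsilon_2/2$ finishes the bound. Your reduction pays for every unit of target adjustment with a universe element; the paper pays for the target adjustment with empty sets and pays in universe elements only for the ``counter'' that pins the number of non-empty sets, which after the initial shrink is at most $\epsilon_2 n/2$.
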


Finally, it is insightful to see how well the representation method performs on the \setpartition problem with few sets (e.g., we consider running times of the $O^*(2^{m})$, where $m$ is the number of sets). A straightforward approach of the meet-in-the-middle attack leads directly to an $O^*(2^{m/2})$ time algorithm. We show that the representation method combined with the analysis of~\cite{DBLP:journals/corr/AustrinKKN15, DBLP:conf/stacs/AustrinKKN15} in fact solves the more general \textsc{Linear Sat} problem. In \textsc{Linear Sat} one is given an integer $t$, matrix $A \in \mathbb{Z}^{n\times m}_{2}$, and vectors $\vec{b} \in \mathbb{Z}^n_2$ and $\vec{\omega} \in \mathbb{N}^m$. The task is to find $\vec{x} \in \mathbb{Z}^m_2$ satisfying $A\vec{x}\equiv\vec{b}$ and $\vec{\omega} \cdot \vec{x} \leq t$.

\begin{theorem}\label{thm:linsat}
There is an $O^*(2^{0.3399m})$-time Monte Carlo algorithm solving \textsc{Linear Sat}.
\end{theorem}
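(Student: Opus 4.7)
My plan is to iterate over the Hamming weight $s := |\vec{x}|$ of the sought solution ($O(m)$ choices) and, for each $s$, apply the representation method of Howgrave--Graham and Joux as analyzed by Austrin et al.~\cite{DBLP:journals/corr/AustrinKKN15, DBLP:conf/stacs/AustrinKKN15}. Because the linear constraint $A\vec{x}\equiv\vec{b}$ is over $\mathbb{Z}_2$, every $\vec{x}$ of weight $s$ decomposes as $\vec{x}=\vec{x}_1\oplus\vec{x}_2$ in exactly $\binom{s}{s/2}\binom{m-s}{\alpha m}$ ways as a pair of $0/1$-vectors of weight $s/2+\alpha m$ whose $1$-entries overlap in $\alpha m$ positions outside the support of $\vec{x}$ (the overlapping $1$'s cancel in $\mathbb{Z}_2$). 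For suitable $\alpha>0$ the number of representations is exponentially large, which is precisely what lets us beat the meet-in-the-middle $O^*(2^{m/2})$ bound.

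To exploit this surplus, I sample a uniformly random linear map $H:\mathbb{Z}_2^n\to\mathbb{Z}_2^k$ and a uniform target $\vec{y}\in\mathbb{Z}_2^k$, enumerate only those weight-$(s/2+\alpha m)$ vectors $\vec{x}_1$ with $H(A\vec{x}_1)=\vec{y}$ and those $\vec{x}_2$ with $H(A\vec{x}_2)=\vec{y}\oplus H(\vec{b})$, and merge the two lists by sorting on $A\vec{x}_i$ to find pairs with $A\vec{x}_1\oplus A\vec{x}_2=\vec{b}$. Choosing $2^k$ just below $\binom{s}{s/2}\binom{m-s}{\alpha m}$ ensures, by a standard second-moment / pairwise-independence argument on the random hash, that with constant probability at least one representation of the true $\vec{x}$ survives the filter. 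Each list has expected size $\binom{m}{s/2+\alpha m}/2^k$ and the merge is near-linear in list size plus output, so the total running time is roughly $\binom{m}{s/2+\alpha m}^2\big/\bigl(\binom{s}{s/2}\binom{m-s}{\alpha m}\bigr)$. Optimizing $\alpha$ and taking the worst case over $s$ is the parameter optimization carried out in the cited analysis and yields the exponent $0.3399m$.

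The main obstacle is incorporating the weight bound $\vec{\omega}\cdot\vec{x}\leq t$ without degrading the exponent. Since the representations are overlapping, $\vec{\omega}\cdot\vec{x}=\vec{\omega}\cdot\vec{x}_1+\vec{\omega}\cdot\vec{x}_2-2\vec{\omega}\cdot(\vec{x}_1\wedge\vec{x}_2)$, and the overlap-weight varies across representations, so one cannot simply bucket $\vec{\omega}\cdot\vec{x}_i$ independently on each side. My plan is to first replace the inequality by an equality by guessing $t':=\vec{\omega}\cdot\vec{x}\in\{0,\dots,t\}$, and then reduce the resulting integer-equality constraint to an additional linear constraint over $\mathbb{Z}_2$ by choosing a random prime $P$ of polynomial size and appending $\lceil \log P\rceil$ bit-rows (encoding $\vec{\omega}\pmod P$) to $A$, absorbing them into $H$. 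This way the weight constraint is just part of the random sketch, a union bound over the $\poly(m, \log t)$ primes and choices of $t'$ keeps the overhead polynomial, and the two-sided weight-coupling through overlaps is handled directly by the hash rather than by an explicit enumeration of overlap-weights.
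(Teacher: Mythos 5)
Your high-level plan (guess Hamming weight $s$, use the representation method with a random hash to keep lists small) is the same strategy the paper uses, but two concrete steps in your handling of the cost constraint $\vec{\omega}\cdot\vec{x}\leq t$ do not go through.

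First, the reduction of the integer constraint $\vec{\omega}\cdot\vec{x}\equiv t' \pmod P$ to $\lceil\log P\rceil$ additional rows of $A$ over $\mathbb{Z}_2$ is incorrect. If you append to $A$ the binary representations of $\omega_i\bmod P$ as extra bit-rows, then the resulting $\mathbb{Z}_2$-linear map computes the \emph{XOR} of those bit-strings over the support of $\vec{x}$, not the binary representation of the \emph{integer sum} $\sum_i \omega_i x_i \bmod P$: XOR has no carries. A trivial counterexample is $\omega_1=\omega_2=1$ and $\vec{x}=(1,1)$: the bit-rows XOR to $0$, but $\vec{\omega}\cdot\vec{x}=2$. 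So the weight constraint cannot be ``absorbed into $H$'' this way; you would need a separate $\mathbb{Z}_P$-valued sketch, at which point the merging/filtering structure of the algorithm changes and needs its own analysis.

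Second, you propose guessing the exact cost $t'=\vec{\omega}\cdot\vec{x}\in\{0,\dots,t\}$ and claim a union bound over choices of $t'$ keeps the overhead polynomial. But $t$ (and hence the number of possible $t'$) can be exponential in $m$, since $\vec{\omega}\in\mathbb{N}^m$ with unbounded entries; iterating over all $t'$ blows up the running time. This cost guessing is actually unnecessary, and the paper avoids it: because it uses \emph{disjoint} (HGJ-style) decompositions, the cost splits additively as $\vec{\omega}\cdot\vec{x}=\vec{\omega}\cdot\vec{x}_1+\vec{\omega}\cdot\vec{x}_2$, so one can simply deduplicate each list by keeping, for every value $A\vec{x}_i$, only the minimum-cost preimage; replacing a half by a cheaper preimage with the same $A$-image can only decrease the total cost (even if the replacement overlaps the other half, since all $\omega_i\geq 0$), so the final combined pair still witnesses a valid solution of cost at most $t$. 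This min-cost-preimage trick is what makes the disjoint decomposition, rather than the BCJ-style overlapping one you chose, the right building block here; with overlapping halves the two sides' costs are coupled through the overlap weight, which is exactly the difficulty you were trying to hash away. Finally, the paper also needs (and proves) a preprocessing step restricting to $s\leq 2m/3$ via information set decoding; without some such bound, the claimed exponent is not established for large $s$.
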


To our best knowledge no $O^*(2^{(0.5-\Omega(1))m})$-time algorithm for \textsc{Linear Sat} was known before. We get as a corollary that, given a bipartite graph $G=(F\dot{\cup}U,E)$, we can determine the smallest size of a set partition in time $O^*(2^{0.3399m})$. We take this as a first signal that the representation method is useful for solving \setpartition (and \setcover) for instances with small universe. To see this consequence, note we can reduce this problem to \textsc{Linear Sat} as follows: For every $f \in F$ add the incidence vector of $N(f)$ as a column to $A$, and set the cost $\omega_i$ of picking this column to be $n|N(f)|+1$. Then the minimum of $\vec{\omega} \cdot \vec{x}$ subject to $A\vec{x}\equiv\vec{1}$ will be $n^2+s$ where $s$ is the number of sets in a minimum set partition. Let us remark that \cite[Page 130]{Dahllof22042} solves (a counting version) of \setpartition in time $O^*(1.2561^m)=O^*(2^{0.329m})$, and Drori and Peleg~\cite{Drori2002473} solve the problem in $O^*(2^{0.3212m})$ time,\footnote{We attempted to find any more recent faster algorithm, but did not find this. Though, we would not be surprised if using more recent tools in branching algorithms as~\cite{DBLP:journals/jacm/FominGK09} one should be able to more significantly outperform our algorithm for \setpartition.} so by no means our algorithm is the fastest in this setting. However, both use sophisticated branching and we find it intriguing that the representation method does work quite well even for the seemingly more general \textsc{Linear Sat} problem.

\section{Exploiting the Presence of Many Witness $\beta$-halves.}\label{sec:manyhalves}
For convenience we will work with \setpartition in this section; the results straightforwardly extend to \setcover but we will not need this in the subsequent section.

\begin{definition}
Given an $(n,m,s)$ instance of \setpartition, a subset $W \subseteq U$ is said to be a \emph{witness $\beta$-halve} if $|W|\in (\tfrac{1}{2}\pm\beta)n$ and there exist disjoint subsets $S_1,S_2 \subseteq F$ such that $N(S_1 \cup S_2)=U$, $\sum_{f \in S_1 \cup S_2}|N(f)|=n$, $N(S_1) = W$, $N(S_2) = U \setminus W$ and $|S_1|+|S_2|=s$.
\end{definition}

Note that this is similar to the intuitive definition outlined in Section~\ref{sec:intro}, except that we require $|W|\in (\tfrac{1}{2}\pm\beta) |U|$ and we adjusted the definition to the \setpartition problem. Since $S_1 \cup S_2$ is a set partition of size $s$ we see that if a witness $\beta$-halve exists, we automatically have a yes instance.

In this section we will give randomized algorithms that solve promise-variants of \setpartition with the promise that, if the instance is a yes-instance, there will be an exponential number of witness halves that are sufficiently balanced (i.e. of size close to $n/2$). In the first subsection we outline the basic algorithm and in the second subsection we show how tools from the literature can be combined with our approach to also give a faster algorithm if the number of sets is exponential in $n$.

\subsection{The basic algorithm}
\begin{theorem}\label{thm:main}
There exists an algorithm $\mathtt{A1}$ that takes an $(n,m,\solsize)$-instance of \setpartition and real numbers $\beta,\zeta >0$ satisfying $2\sqrt{\beta} \leq \zeta < 1/4$ as input, runs in time $2^{(1-(\zeta/2)^4)n}\poly(n)m$, and has the following property: If there exist at least $\Omega(2^{\zeta n})$ witness $\beta$-halves it returns \yes with at least constant probability, and if there does not exist a set partition of size $\solsize$ it returns \no.
\end{theorem}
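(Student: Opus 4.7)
The plan is to follow a two-phase bottom-up dynamic program, speeding up the second phase by subsampling. Set $\alpha := (\zeta/2)^2$ and define $D_k[W'] \in \{0,1\}$ to indicate whether $W' \subseteq U$ admits a partition into exactly $k$ pairwise-disjoint sets from $F$, so that the standard recurrence $D_k[W'] = \bigvee_{f \in F : N(f) \subseteq W'} D_{k-1}[W' \setminus N(f)]$ is available. The \emph{precomputation phase} will compute $D_k[W']$ for every $W' \subseteq U$ with $|W'| \le (\tfrac{1}{2} - \alpha) n$ and every $k \in [s]$, processed bottom-up by $|W'|$. By Lemma~\ref{fact:bounds}(1) we have $h(\tfrac{1}{2} - \alpha) \le 1 - \alpha^2 = 1 - (\zeta/2)^4$, so the number of targeted subsets is at most $n \cdot 2^{(1 - (\zeta/2)^4) n}$, and this phase fits inside the target running time.

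In the \emph{sampling phase}, form $\mathcal{R}$ by including each subset $W \subseteq U$ with $|W| \in (\tfrac{1}{2} \pm \beta) n$ independently with probability $p := \min\{1, 2^{-\zeta n}\}$; by the hypothesis of $\Omega(2^{\zeta n})$ witness halves, the probability that $\mathcal{R}$ misses every witness halve is at most $(1-p)^{\Omega(2^{\zeta n})} \le e^{-\Omega(1)}$, which is bounded away from $1$. For every $W \in \mathcal{R}$, compute $D_k[W]$ and $D_k[U \setminus W]$ for all $k \in [s]$ by memoized top-down recursion on the same relation, using the precomputed table as the base case whenever a subproblem shrinks to size at most $(\tfrac{1}{2} - \alpha) n$. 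The algorithm outputs \yes if for some $W \in \mathcal{R}$ there exist $k_1, k_2$ with $k_1 + k_2 = s$ and $D_{k_1}[W] = D_{k_2}[U \setminus W] = 1$, and \no otherwise. Soundness is immediate because any matched pair glues into a size-$s$ set partition of $U$, and completeness holds because for a sampled witness halve $W$ with decomposition $S_1 \sqcup S_2$ of total size $s$, the values $D_{|S_1|}[W]$ and $D_{|S_2|}[U \setminus W]$ are forced to be $1$ by construction, which the memoized recursion reports.

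The main obstacle is bounding Phase B's cost. The memoization starting from $W$ only reaches subsets $W'' \subseteq W$ with $|W''| > (\tfrac{1}{2}-\alpha)n$, and since $|W| \le (\tfrac{1}{2}+\beta)n$ and $\beta \le \alpha$, the number of such $W''$ is at most $\sum_{j=0}^{(\alpha+\beta)n} \binom{|W|}{j}$. A short calculation using Lemma~\ref{fact:bounds}(2) bounds this by $\poly(n) \cdot 2^{\zeta^2 \lg(2/\zeta)\, n}$. Since there are at most $\poly(n) \cdot 2^{(1-\beta^2)n}$ halves in total by Lemma~\ref{fact:bounds}(1), the expected size of $\mathcal{R}$ is at most $\poly(n) \cdot 2^{(1-\zeta)n}$, so Phase B costs at most $\poly(n) \cdot m \cdot 2^{(1 - \zeta + \zeta^2 \lg(2/\zeta))\, n}$. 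The assumption $\zeta < 1/4$ implies $\zeta - \zeta^2 \lg(2/\zeta) \ge (\zeta/2)^4$, so Phase B is dominated by the precomputation phase and the total time matches the claimed bound. The delicate point is the interplay between the threshold $\alpha = (\zeta/2)^2$ (small enough that the precomputed slab fits inside the $2^{(1-(\zeta/2)^4)n}$ budget) and the sampling rate $2^{-\zeta n}$ (small enough that the above-threshold recursion also fits inside that budget).
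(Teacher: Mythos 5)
Your proposal is correct and follows essentially the same approach as the paper: sample subsets near size $n/2$ at rate $2^{-\zeta n}$, evaluate the set-partition DP only at sampled sets and the states they induce, and bound that state count by splitting into a ``small'' regime (all sets below a threshold of size roughly $(\tfrac12-(\zeta/2)^2)n$, controlled by the entropy bound $h(\tfrac12-\alpha)\le 1-\alpha^2$) and a ``large'' regime (subsets of a sampled $W$ missing fewer than $(\alpha+\beta)n$ elements, controlled by $h(x)\le x\lg(4/x)$). Your explicit two-phase precompute-then-memoize DP with the threshold $\alpha=(\zeta/2)^2$ is a concrete instantiation of the paper's Lemma~\ref{lem:downclosure1} (DP restricted to the down-closure) together with Lemma~\ref{lem:bndclosure}, whose crossover point $\lambda'=(1-\zeta^2)(\tfrac12+\beta)$ plays the role of your threshold; the two bounds you combine are exactly the two terms of the paper's $\min$.
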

Note that the theorem does not guarantee anything on Algorithm~$\mathtt{A1}$ if a partition of $\solsize$ sets exists and there are only few witness halves, but we will address this later. A high level description of the Algorithm~$\mathtt{A1}$ is given in Figure~\ref{alg:mainalg}:
\begin{figure}[H]
\begin{framed}
\begin{algorithmic}[1]
\REQUIRE $\mathsf{A1}(G=(F\mathop{\dot{\cup}}U,E), \solsize,\zeta,\beta)$. \hfill\algorithmiccommentt{Assumes $2\sqrt{\beta}\leq \zeta < 1/4$}
\ENSURE An estimate of whether there exists a set partition of size $\solsize$.
\FOR{integer $l$ satisfying $\lfloor(1/2-\beta)n\rfloor < l < \lceil(1/2+\beta)n\rceil$}\label{line:startloop}
	\STATE Sample $\cW \subseteq \binom{U}{l}$ by including every set of $\binom{U}{l}$ with probability $2^{-\zeta n}$.\label{lin:smpl}
	\STATE For every $W\in \cW$ and $i \in [n]$, compute $c_i(W)$ and $c_i(U \setminus W)$.\label{lin:lem}
	\LineIf{$\exists i \in [n]: c_i(W) \wedge c_{s-i}(U \setminus W) $}{\algorithmicreturn\ \yes.}
\ENDFOR
\STATE \algorithmicreturn\ \no.
\end{algorithmic}
\end{framed}
\caption{High level description of the Algorithm implementing Theorem~\ref{thm:main}.}
\label{alg:mainalg}
\end{figure}

Here, we define $c_i(W)$ to be true if and only if there exists $S_1 \subseteq F$ with $|S_1|=i$, $N(S_1)=W$, and for every $f,f' \in S_1$ with $f \neq f'$, $N(f) \cap N(f') = \emptyset$.
Given a set family $\cW$, we denote ${\downarrow}\cW=\{X: \exists W \in \cW \wedge X \subseteq W \}$ for the \emph{down-closure} of $\cW$. The following lemma concerns the sub-routine invoked in Algorithm~\ref{alg:mainalg} and can be proved via known dynamic programming techniques, and is postponed to Appendix~\ref{app:knowndp}.

\begin{lemma}\label{lem:downclosure1}
There exists an algorithm that given a bipartite graph $G=(F \mathop{\dot{\cup}} U,E)$ and $\cW \subseteq 2^{U}$ with $|U|=n$ and $|F|=m$, computes $c_i(W)$ for all $W \in \cW$ and $i \in [n]$ in $O(\poly(n)|{\downarrow}\cW| m)$ time.
\end{lemma}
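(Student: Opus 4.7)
The plan is to adapt the standard $O^*(2^n m)$ dynamic programming algorithm for \setpartition, but evaluate the table only on subsets in ${\downarrow}\cW$. Concretely, I would base the DP on the recurrence
\[
c_i(X) \;=\; \bigvee_{f \in F:\; \emptyset \neq N(f) \subseteq X}\; c_{i-1}(X \setminus N(f))
\]
for nonempty $X$ and $i \geq 1$, with base cases $c_0(\emptyset)=\mathrm{true}$, $c_0(X)=\mathrm{false}$ for $X\neq \emptyset$, and $c_i(\emptyset)=\mathrm{false}$ for $i\geq 1$. Correctness is immediate: any size-$i$ partition of $X$ into sets with pairwise disjoint neighborhoods can be obtained by picking any one set $f$ in it and extending a size-$(i-1)$ partition of $X \setminus N(f)$, and conversely $N(f)$ is automatically disjoint from all $N(f')$ with $f' \in S \subseteq F$ certifying $c_{i-1}(X \setminus N(f))$ because $N(S) = X \setminus N(f)$.

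The central observation is that for $X \in {\downarrow}\cW$ and $f \in F$ with $N(f) \subseteq X$, the subset $X \setminus N(f)$ is again contained in $X$ and hence lies in ${\downarrow}\cW$. So filling the table in order of increasing $|X|$, restricted to $X \in {\downarrow}\cW$, only ever requires lookups of entries that have themselves already been computed in ${\downarrow}\cW$; the recurrence remains valid under this restriction.

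The algorithm therefore has two phases. First, I would materialize ${\downarrow}\cW$ by BFS: start with $\cW$ and repeatedly take a discovered set $X$, generate $X \setminus \{u\}$ for each $u \in X$, and insert into a hash table if not already present. Since each $X \in {\downarrow}\cW$ is produced at most $n$ times (once per superset of the form $X \cup \{u\}$ lying in ${\downarrow}\cW$), this phase takes $O(n^2 |{\downarrow}\cW|)$ time counting the $O(n)$ hashing cost per insertion. Second, I would bucket ${\downarrow}\cW$ by cardinality and fill $c_i(X)$ for $i=0,\ldots,n$ and $X \in {\downarrow}\cW$ in increasing order of $|X|$; each of the $n \cdot |{\downarrow}\cW|$ entries is computed by iterating over $f \in F$ and performing $O(n)$-time set operations and a hash lookup for $c_{i-1}(X \setminus N(f))$, giving total DP time $O(n^2 m |{\downarrow}\cW|)$ and overall running time $O(\mathrm{poly}(n)\, |{\downarrow}\cW|\, m)$ as claimed.

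The only nontrivial point is generating ${\downarrow}\cW$ in time $O(\mathrm{poly}(n)|{\downarrow}\cW|)$ rather than the naive $O(2^n|\cW|)$ bound one would get by enumerating all subsets of each $W \in \cW$ individually; the BFS-with-hashing sketched above accomplishes this, and is precisely the step where the down-closure structure (as opposed to an arbitrary sub-family of $2^U$) is exploited.
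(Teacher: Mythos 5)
Your proof uses a genuinely different DP decomposition than the paper. You formulate the DP only on pairs $(i, X)$, where $i$ is the partition size and $X$ the covered set, and iterate over all $f \in F$ inside each entry, so there are $O(n\,|{\downarrow}\cW|)$ entries each taking $O(m\cdot\poly(n))$ time. The paper instead indexes the DP additionally by a prefix $j$ over an arbitrary ordering $F=\{f_1,\ldots,f_m\}$, using the standard include/exclude-$f_j$ recurrence, so there are $O(nm\,|{\downarrow}\cW|)$ entries each computed in $O(\poly(n))$ time. Both give $O(\poly(n)\,m\,|{\downarrow}\cW|)$ overall, and both rely on the same key observation that the subproblem $X \setminus N(f)$ (resp.\ $X \setminus N(f_j)$) stays inside ${\downarrow}\cW$. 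Your explicit BFS-with-hashing construction of ${\downarrow}\cW$ in $O(\poly(n)\,|{\downarrow}\cW|)$ time is also correct and spells out a step the paper leaves implicit.

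There is one gap: your recurrence is quantified over $f$ with $\emptyset \neq N(f)\subseteq X$ and your base cases set $c_i(\emptyset)=\mathrm{false}$ for $i\geq 1$. Neither the definition of $c_i(W)$ nor the lemma statement excludes sets $f\in F$ with $N(f)=\emptyset$, and if such sets exist then $c_i(\emptyset)$ can be true for $i\geq 1$ and, more generally, your recurrence misses every $S_1$ that contains an empty-neighborhood set. The paper's prefix-indexed recurrence handles this case for free, since including $f_j$ with $N(f_j)=\emptyset$ simply transitions to $c^{i-1}_{j-1}[X]$ and the index $j$ still strictly decreases. The omission is easy to repair in your formulation — if there are $k$ sets with empty neighborhood, compute $c'_i$ over the non-empty sets only and output $c_i(W)=\bigvee_{j=0}^{\min(k,i)} c'_{i-j}(W)$ — but as written your algorithm is not correct on the full class of inputs the lemma covers, so the fix should be stated.
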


Thus, for further preparation of the proof of Theorem~\ref{thm:main}, we need to analyze the maximum size of the (down/up)-closure of $\cW$ in Algorithm~$\mathtt{A1}$ in Figure~\ref{alg:mainalg}.

\begin{lemma}\label{lem:bndclosure}
Let $\zeta>0$, $\beta$ (which may be negative) be real numbers satisfying $2\sqrt{|\beta|} \leq \zeta < 1/4$ and $|U|=n$. Suppose $\cW\subseteq \binom{U}{(1/2 +\beta)n}$ with $|\cW|\leq 2^{(1-\zeta)n}$. Then $|{\downarrow}\cW| \leq n2^{(1-(\zeta/2)^4)n}$.
\end{lemma}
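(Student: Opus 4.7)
The plan is to stratify ${\downarrow}\cW$ by cardinality and bound each stratum by whichever of two natural estimates is stronger. For $k \in \{0, 1, \ldots, n\}$ let $D_k = \{X \in {\downarrow}\cW : |X| = k\}$, so $|{\downarrow}\cW| = \sum_{k} |D_k|$ and only strata with $k \leq (1/2+\beta)n$ are nonempty. Two bounds are immediate: the \emph{trivial} bound $|D_k| \leq \binom{n}{k}$, and the \emph{covering} bound $|D_k| \leq |\cW|\binom{(1/2+\beta)n}{k} \leq 2^{(1-\zeta)n}\binom{(1/2+\beta)n}{k}$, the latter because every element of $D_k$ is a size-$k$ subset of some $W \in \cW$. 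My goal is to show that taking the minimum of these two bounds always yields $|D_k| \leq 2^{(1-(\zeta/2)^4)n}$; the lemma then follows by summing over the at most $n$ nonempty strata.

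The natural crossover is $k^* := \lfloor(1/2 - (\zeta/2)^2)n\rfloor$. For $k \leq k^*$ I would apply the trivial bound: since $h$ is increasing on $[0,1/2]$, Lemma~\ref{fact:bounds}(1) gives $\binom{n}{k} \leq 2^{h(k/n)n} \leq 2^{h(1/2-(\zeta/2)^2)n} \leq 2^{(1-(\zeta/2)^4)n}$. For $k > k^*$ I would apply the covering bound with the substitution $\gamma := (1/2+\beta) - k/n \geq 0$. The hypothesis $2\sqrt{|\beta|} \leq \zeta$ is equivalent to $|\beta| \leq (\zeta/2)^2$, so $\gamma \leq (\zeta/2)^2 + \beta \leq 2(\zeta/2)^2 = \zeta^2/2$. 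By the symmetry $h(x) = h(1-x)$ and Lemma~\ref{fact:bounds}(2),
\[
\binom{(1/2+\beta)n}{k} \leq 2^{h(\gamma/(1/2+\beta))(1/2+\beta)n} \leq 2^{\gamma \lg(4(1/2+\beta)/\gamma)n} \leq 2^{\gamma \lg(4/\gamma)n},
\]
so the covering bound reduces to $|D_k| \leq 2^{(1-\zeta+\gamma\lg(4/\gamma))n}$.

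The analytic core of the proof, which I expect to be the only real obstacle, is to verify $\gamma\lg(4/\gamma) \leq \zeta - (\zeta/2)^4$ whenever $0 < \gamma \leq \zeta^2/2$ and $0 < \zeta < 1/4$. Since $x \mapsto x\lg(4/x)$ is increasing on $(0, 4/e)$, and $\zeta^2/2 \leq 1/32$ lies comfortably inside this interval, the worst case is $\gamma = \zeta^2/2$; after dividing by $\zeta$ the inequality becomes $(\zeta/2)\lg(8/\zeta^2) \leq 1 - \zeta^3/16$. A quick differentiation shows that the left-hand side is increasing in $\zeta$ on $(0,1/4]$, so it suffices to check $\zeta = 1/4$, where it reduces to $(1/8)\lg 128 = 7/8 \leq 1 - 1/1024$, with ample slack. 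Once this inequality is in hand, both regimes give $|D_k| \leq 2^{(1-(\zeta/2)^4)n}$, and summing across the at most $n$ nonempty strata completes the proof.
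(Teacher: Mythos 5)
Your proposal is correct and follows essentially the same strategy as the paper: stratify ${\downarrow}\cW$ by cardinality, bound each stratum by the minimum of the trivial bound $\binom{n}{k}$ and the covering bound $|\cW|\binom{(1/2+\beta)n}{k}$, pick a crossover, and verify each regime via Lemma~\ref{fact:bounds}. The only cosmetic difference is the choice of crossover (you take the $\beta$-independent threshold $k^*/n=1/2-(\zeta/2)^2$, while the paper uses $\lambda'=(1-\zeta^2)(1/2+\beta)$), which shifts the routine calculus but does not change the argument's structure; both verifications are correct and yield the stated exponent $1-(\zeta/2)^4$.
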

\begin{proof}
Let $\lambda \leq \beta$ and $w_\lambda = |\{ W \in {\downarrow}\cW: |W|=\lambda n\}|$, so $|{\downarrow}\cW| \leq n\cdot\max_{\lambda}w_\lambda$. Then we have the following upper bounds:
\[
	w_\lambda \leq \binom{n}{\lambda n} \leq 2^{h(\lambda)n}, \qquad w_\lambda \leq |\cW|\binom{(1/2+\beta) n}{\lambda n} \leq 2^{\left((1-\zeta)+h\left(\frac{\lambda}{1/2+\beta}\right)(1/2+\beta)\right) n}.
\]
To see the second upper bound, note that any set $W \in \cW$ can have at most $\binom{(1/2+\beta) n}{\lambda n}$ subsets of size $\lambda n$. Thus, we see that $|{\downarrow}\cW|/n$ is upper bounded by $2^{f(\zeta,\beta)n}$, where
\[
	f(\zeta,\beta) = \max_{\lambda\leq 1/2+ \beta} \min \left\{h(\lambda), (1-\zeta)+h\left(\frac{\lambda}{1/2+\beta}\right)(1/2+\beta) \right\}.
\]
The remainder of the proof is therefore devoted to upper bounding $f(\zeta,\beta)$. We establish this by evaluating both terms of the minimum, setting $\lambda$ to be $\lambda'=(1-\zeta^2)(1/2+\beta)$. 
First note that by our assumption
\begin{align*}
	\lambda' = (1-\zeta^2)(1/2+\beta) &= 1/2-\zeta^2/2+\beta-\zeta^2\beta \leq 1/2-\zeta^2/2+\zeta^2/4-\zeta^2\beta &< 1/2,\\
	\lambda' / (1/2+\beta) &= 1-\zeta^2 &> 1/2.
\end{align*}

Therefore, since $h(x)$ is increasing for $x < 1/2$, $h(\lambda) \leq h(\lambda')$ for $\lambda \leq \lambda'$. Similarly, $h\left(\frac{\lambda}{1/2+\beta}\right)$ is at most $h\left(\frac{\lambda'}{1/2+\beta}\right)$ for $\lambda \geq \lambda'$, and we may upper bound $f(\zeta,\beta)$ by the maximum of the two terms of the minimum in $f(\zeta,\beta)$ obtained by setting $\lambda=\lambda'$. For the first term of the minimum, note that by Lemma~\ref{fact:bounds}, Item 1:
\begin{align*}
	h(\lambda') = h((1-\zeta^2)(1/2+\beta)) &\leq 1-(1/2-(1-\zeta^2)(1/2+\beta))^2\\
																					&= 1-\left( \zeta^2/2-\beta+\beta\zeta^2 \right)^2\\
																					&\leq 1-\left( \zeta^2/2-\beta \right)^2\\
																					&\leq 1-(\zeta^2/4)^2 = 1-(\zeta/2)^4.
\end{align*}
For the second term we have
\begin{align*}
 1-\zeta + h\left(\frac{\lambda'}{1/2+\beta}\right)(1/2+\beta) &=	1-\zeta + h(1-\zeta^2)(1/2+\beta) &\hfill\algorithmiccommentt{by Lemma~\ref{fact:bounds}, Item 2}\\
																							 &= 1-\zeta + h(\zeta^2)(1/2+\beta) &\algorithmiccommentt{$\beta < \frac{1}{64}$ by assumption} \\
																							 &\leq 1-\zeta+\zeta^2\lg\left(\frac{4}{\zeta^2}\right)\tfrac{33}{64}&\algorithmiccommentt{$\zeta\lg\left(\frac{4}{\zeta^2}\right) \leq \tfrac{3}{2}$}\\
																							 &\leq 1-\zeta+\zeta\tfrac{3}{2}\cdot\tfrac{33}{64}&\\
																							 &\leq 1- \zeta/10.&
\end{align*}
note for the penultimate inequality that $\zeta\lg(\frac{4}{\zeta^2})$ is monotone increasing for $0\leq \zeta \leq 1/4$ and substituting $\zeta=1/4$ in this expression thus upper bounds it with $3/2$.
\end{proof}

Now we are ready to wrap up this section with the proof of Theorem~\ref{thm:main}:
\begin{proof}[Proof of Theorem~\ref{thm:main}]

We can implement Line~\ref{lin:lem} by invoking the algorithm of Lemma~\ref{lem:downclosure1} with both $|\cW|$ and $\cW' = \{W: [n] \setminus W \in \cW\}$. This will take time $O(\poly(n)(|{\downarrow}W|+|{\downarrow} \cW'|)m)$. This is clearly the bottleneck of the algorithm, so it remains to upper bound (the expectation of) $|{\downarrow}W|+|{\downarrow} \cW'|$ by applying Lemma~\ref{lem:bndclosure}. To do this, note that $\cW \subseteq \binom{n}{l}$, $\cW' \subseteq \binom{n}{n-l}$, and we have that $(1/2-\beta)n \leq l,n-l \leq (1/2+\beta)n$. Also, $2\sqrt{|\beta|} \leq \zeta$ by assumption so indeed Lemma~\ref{lem:bndclosure} applies. Then on expectation $|\cW| \leq \binom{n}{(1/2+\beta) n}2^{-\zeta n} \leq 2^{(1-\zeta)n}$, and thus the running time\footnote{Due to the sampling in Line~\ref{lin:smpl}, we actually only get an upper bound on the expectation of the running time, but by Markov's inequality we can simply ignore iterations where $\cW$ exceeds twice the expectation.} indeed is as claimed. 

For the correctness, it is easily checked that the algorithm never returns false positives. Moreover, if there exist at least $\Omega(2^{\zeta n})$ witness $\beta$-halves then for some $l$ in the loop of Line~\ref{line:startloop}, there are at least $\Omega(2^{\zeta n}/n)$ witness halves of size $l$. Thus in this iteration we see by Lemma~\ref{fact:bounds}, Part 3 that
\begin{equation}\label{eq:probwit}
	\Pr[\nexists \text{ witness halve } W \in \cW ] \leq \left(1-\frac{1}{2^{\zeta n}}\right)^{\Omega(2^{\zeta n}/n)} \leq e^{-1/n}.
\end{equation}
and if a witness halve $W \in \cW$ exists the algorithm returns $\mathbf{yes}$ since $c_i(W) \wedge c_{s-i}(U \setminus W)$ holds for some $i$ by the definition of witness halve.
Therefore, if we perform $n$ independent trials of Algorithm $\mathtt{A_1}$ it return $\mathbf{yes}$ with probability at least $1-1/e$.
\end{proof}

\subsection{Improvement in the case with exponentially many input sets.}

In this section we show that under some mild conditions, the existence of many witness halves can also be exploited in the presence of exponentially many sets. This largely builds upon machinery developed by Bj\"orklund et al.~\cite{DBLP:journals/siamcomp/BjorklundHK09,DBLP:journals/mst/BjorklundHKK10}. To state our result as general as possible we assume the sets are given via an oracle so our running can be sublinear in the input if the number of sets is close to $2^n$.

\begin{theorem}\label{thm:manysets}
There exists an algorithm that, given oracle access to an $(n,m,\solsize)$-instance of \setpartition and real numbers $\beta,\zeta >0$ satisfying $2\sqrt{\beta} \leq \zeta < 1/4$, runs in time $2^{(1-(\zeta/2)^4)n}\poly(n)T$ and has the following property: if there exist at least $\Omega(2^{\zeta n})$ witness $\beta$-halves, it outputs \yes with constant probability and if there does not exist a set partition of size $\solsize$ it outputs \no.

Here the oracle algorithm accepts $X\subseteq U$ as input, and decides whether there exists $f \in F$ with $N(f)=X$ in time $T$.
\end{theorem}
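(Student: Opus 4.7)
The plan is to keep the outer structure of Algorithm~$\mathtt{A1}$ from Theorem~\ref{thm:main} intact and only replace the dynamic programming subroutine of Lemma~\ref{lem:downclosure1}, whose cost is linear in $m$ and therefore unusable when $F$ is given only via an oracle. The outer loop over $l \in (1/2\pm\beta)n$, the sampling of $\cW \subseteq \binom{U}{l}$ with inclusion probability $2^{-\zeta n}$, the bound $|{\downarrow}\cW|\leq n\cdot 2^{(1-(\zeta/2)^4)n}$ from Lemma~\ref{lem:bndclosure}, and the constant-probability success argument via Lemma~\ref{fact:bounds}, Item~3 (after $n$ independent trials) all carry over verbatim. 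Let $\cW' = \{U \setminus W : W \in \cW\}$ and $\mathcal{D} := {\downarrow}(\cW \cup \cW')$; then $|\mathcal{D}| \leq 2n \cdot 2^{(1-(\zeta/2)^4)n}$ in expectation and, crucially, $\mathcal{D}$ is itself closed under taking subsets.

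The substitute subroutine is inspired by the subset-transform / inclusion--exclusion machinery of Bj\"orklund, Husfeldt and Koivisto, but restricted to the down-closed family $\mathcal{D}$ instead of the full power set of $U$. Concretely, I would (i)~query the oracle once for each $Y \in \mathcal{D}$ to record $\mathbb{1}[Y \in \cS]$, at total cost $|\mathcal{D}| T$; (ii)~for each size $l \in [n]$, run a subset-sum zeta transform restricted to $\mathcal{D}$ (for each $j \in [n]$ and each $Y \in \mathcal{D}$ with $j \in Y$, add $s_l(Y \setminus \{j\})$ to $s_l(Y)$; this is well defined because $\mathcal{D}$ is down-closed) to obtain $s_l(Y) = |\{X \in \cS : X \subseteq Y,\ |X|=l\}|$ for every $Y \in \mathcal{D}$ in $|\mathcal{D}|\poly(n)$ time; and (iii)~for every $W \in \cW \cup \cW'$ and every $i \in [n]$, declare $c_i(W)$ to hold iff the integer $[z^{|W|}]\bigl(\sum_{l} s_l(W)\,z^l\bigr)^i$ is positive. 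A short accounting argument validates step~(iii): the constraints $X_j \subseteq W$ and $\sum_j|X_j| = |W|$ together force $|\bigcup_j X_j| = |W|$, hence the $X_j$ are pairwise disjoint and union to $W$, exactly matching the definition of $c_i(W)$; we may assume without loss of generality that $\cS$ contains no empty set. The algorithm then returns \yes if some pair $(W,i)$ with $W\in\cW$ satisfies $c_i(W)\wedge c_{s-i}(U\setminus W)$, and \no otherwise. All intermediate counts are bounded by $m^n \leq 2^{n^2}$ and hence admit $\poly(n)$-bit exact arithmetic.

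Summing the costs, the total running time is $|\mathcal{D}|(\poly(n)+T) = 2^{(1-(\zeta/2)^4)n}\poly(n)T$, as required. No false positives can occur because a \yes answer exhibits explicit disjoint packings of $W$ and $U\setminus W$, and the constant-probability success analysis is identical to that of Theorem~\ref{thm:main}. The main obstacle, beyond straightforward bookkeeping, is to ensure that no intermediate step ever inspects a subset outside $\mathcal{D}$, since a single out-of-family access would invalidate the $|\mathcal{D}|$ time budget; this is precisely where the down-closedness of $\mathcal{D}$ is used, together with the observation that the polynomial power $s_W(z)^i$ at stage~(iii) only consults the value $s_l(W)$ at the single argument $W$, never at any proper superset of $W$.
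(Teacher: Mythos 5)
Your overall plan is the right one and matches the paper's: keep Algorithm~$\mathtt{A1}$ from Theorem~\ref{thm:main} unchanged and replace Lemma~\ref{lem:downclosure1} with a subroutine that, using the oracle and working only inside the down-closed family $\mathcal{D}$, computes the predicates $c_i(W)$ in $|\mathcal{D}|\cdot\poly(n)\cdot T$ time. Steps (i) and (ii) are also fine and correspond to the paper's tabulation of $f_x$ and the restricted zeta transform computing $g_x$. The error is in step (iii). You claim that $X_j\subseteq W$ together with $\sum_j|X_j|=|W|$ forces $\bigl|\bigcup_j X_j\bigr|=|W|$ and hence pairwise disjointness. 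That implication is false: the subset constraints only give $\bigl|\bigcup_j X_j\bigr|\leq|W|$, and overlapping tuples can still sum to $|W|$ (e.g.\ $W=\{a,b,c\}$, $X_1=\{a,b\}$, $X_2=\{a\}$). Consequently $[z^{|W|}]\bigl(\sum_l s_l(W)z^l\bigr)^i$ counts, in addition to genuine ordered partitions of $W$, many spurious tuples whose union is a proper subset of $W$. Your $c_i$ is therefore only an over-approximation, so ``no false positives'' fails at this point and the Monte Carlo one-sidedness required by the theorem is broken.

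What is missing is exactly an inclusion--exclusion (M\"obius) step, and this is what the paper does in Lemma~\ref{lem:downclosure2}. Your coefficient $[z^{x}]\bigl(\sum_l s_l(W)z^l\bigr)^i$ is the paper's $h_{x,i}(W)$, which counts tuples with union \emph{contained in} $W$ and total size $x$. The paper then computes $c'_{x,i}(X)=\sum_{Z\subseteq X}(-1)^{|Z|}h_{x,i}(X\setminus Z)$ by a second Yates-style dynamic program restricted to $\mathcal{D}$ (valid because $\mathcal{D}$ is down-closed), and only after this subtraction does setting $x=|X|$ force the tuple to be a disjoint partition of $X$. You need to add this M\"obius-transform pass between your steps (ii)/(iii) and the final threshold test; with it, the rest of your argument (running time, down-closure bookkeeping, and reuse of Theorem~\ref{thm:main}'s probability analysis) goes through.
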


The proof of Theorem~\ref{thm:manysets} is identical to the proof of Theorem~\ref{thm:main} (and therefore omitted), except that here we use the following lemma instead of Lemma~\ref{lem:downclosure1}:

\begin{lemma}\label{lem:downclosure2}
There exists an algorithm that, given $\cW \subseteq 2^{U}$ and oracle access to a bipartite graph $G=(F \mathop{\dot{\cup}} U,E)$, computes the values $c_i(W)$ for all $W \in \cW$ in $O(T|{\downarrow}\cW|\poly(n))$ time. Here the oracle algorithm accepts $X\subseteq U$ as input, and decides whether there exists $f \in F$ with $N(f)=X$ in time $T$.
\end{lemma}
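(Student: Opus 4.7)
The plan is to adapt the dynamic programming underlying Lemma~\ref{lem:downclosure1}, replacing its iteration over all $f \in F$ (infeasible when $m$ is exponential in $n$) by a subset-convolution computation restricted to the down-closed family ${\downarrow}\cW$. First I would enumerate ${\downarrow}\cW$ explicitly by a breadth-first search starting from $\cW$: each set spawns its $n$ single-element deletions, which are inserted into a hash table to avoid duplicates, so this phase costs $O(n\,|{\downarrow}\cW|)$ time. Next, for each $X \in {\downarrow}\cW$ I would issue one oracle query to obtain $b(X):=[\exists f \in F : N(f)=X]$, at a total cost of $O(T\,|{\downarrow}\cW|)$.

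The values $c_i(W)$ satisfy the disjoint-subset-convolution recurrence
\[
  c_i(W) \;=\; \bigvee_{Y \dot\cup Z = W} c_{i-1}(Y) \wedge b(Z),
\]
so I would compute them inductively for $i=1,\dots,n$ using the Bj\"orklund-Husfeldt-Kaski-Koivisto subset-convolution algorithm (zeta transform, pointwise multiplication of size-indexed polynomials, M\"obius transform, extraction of the $|W|$-th coefficient) in the integer semiring, then threshold to $\{0,1\}$ to recover the Boolean $c_i$. The essential point enabling the restriction to ${\downarrow}\cW$ is that the only access pattern in the elementwise zeta and M\"obius sweeps is $Y \leftrightarrow Y\setminus\{j\}$, and since ${\downarrow}\cW$ is downward-closed both endpoints lie in the family; each restricted transform therefore runs in $O(n\,|{\downarrow}\cW|)$ time, one convolution in $O(n^2\,|{\downarrow}\cW|)$, and the $n$ outer iterations in $O(n^3\,|{\downarrow}\cW|)$ total.

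Correctness of the thresholding follows because $(c_{i-1}*b)(W)$ is a sum of non-negative integers counting ordered ways to extend some $(i{-}1)$-set partition of a $Y \subseteq W$ using sets from $A:=\{N(f):f\in F\}$ by appending $W\setminus Y \in A$; this count is positive iff an $i$-set partition of $W$ into sets from $A$ exists. Summing the three phases yields $O((T+\poly(n))\,|{\downarrow}\cW|) = O(T\,|{\downarrow}\cW|\,\poly(n))$, matching the claim.

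The main obstacle I foresee is carefully justifying that every component of the subset-convolution pipeline really restricts to ${\downarrow}\cW$ without introducing any access outside the family, and that the integer coefficients of the size-indexed polynomials remain of $\poly(n)$ bit length (so that each arithmetic operation costs only $\poly(n)$); both points are essentially routine once one is explicit about the transform sweeps, but they must be checked to preserve the claimed running-time bound.
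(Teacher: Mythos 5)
Your proposal is correct and follows essentially the same approach as the paper: query the oracle once per $X\in{\downarrow}\cW$ to tabulate the set-family indicator, then compute via ranked zeta/M\"obius (subset-convolution) transforms restricted to ${\downarrow}\cW$, using exactly the observation that downward closure means the sweeps $X\leftrightarrow X\setminus\{u_j\}$ never leave the family. The only cosmetic difference is that you iterate $n$ pairwise convolutions $c_i = c_{i-1}*b$ with thresholding at each step, whereas the paper applies one zeta transform, takes pointwise $i$-th powers of the rank-indexed polynomial to get $h_{x,i}$, and applies one M\"obius transform before thresholding; both variants meet the claimed $O(T\,|{\downarrow}\cW|\,\poly(n))$ bound.
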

This lemma mainly reiterates previous work developed by Bj\"orklund et al.~\cite{DBLP:journals/siamcomp/BjorklundHK09,DBLP:journals/mst/BjorklundHKK10}, but since they did not prove this lemma as such we include a proof here in Appendix~\ref{subsec:yates}.

\section{Finding Large Set Covers Faster}\label{sec:mainproofs}
In this section we will use the tools of the previous sections to prove our main results, Theorems~\ref{thm:largesol} and~\ref{thm:col}. We first connect the existence of large solutions to the existence of many witness halves in the following lemma:

\begin{lemma}\label{lem:largesolutionmanywitnesssplits}
If an $(n,m,\solsize)$-instance of \setpartition has no empty sets and satisfies $\solsize \geq \sigma_0 n$ and $|N(f)| \leq \sigma_0^4n/8$ for every $f \in F$, there is a solution if and only if there exist at least $2^{\sigma_0 n}/4$ witness $(\sigma_0^2/4)$-halves.
\end{lemma}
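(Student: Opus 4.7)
The plan is to prove the two directions separately. The backward direction is immediate from the definition of witness halve: any single witness $(\sigma_0^2/4)$-halve $W$ already comes equipped with disjoint $S_1, S_2 \subseteq F$ whose union is a set partition of size $s$, so the existence of one witness halve already certifies a solution. All the content lies in the forward direction.

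For the forward direction, I would fix an arbitrary set partition $S = \{f_1,\ldots,f_s\} \subseteq F$, set $w_i := |N(f_i)|$, and for each $S_1 \subseteq S$ write $W(S_1) := N(S_1)$. Because $S$ is a partition, the neighborhoods $N(f_i)$ are pairwise disjoint with $\sum_i w_i = n$, which gives $|W(S_1)| = \sum_{i: f_i \in S_1} w_i$. Moreover, since no $f \in F$ is empty and the $N(f_i)$ are disjoint, the map $S_1 \mapsto W(S_1)$ is injective on subsets of $S$. Verifying the remaining clauses of the witness-halve definition with $S_2 := S\setminus S_1$ is routine, so it suffices to count $S_1 \subseteq S$ satisfying $|W(S_1)| \in (\tfrac{1}{2} \pm \tfrac{\sigma_0^2}{4})n$ and show this count is at least $2^{\sigma_0 n}/4$.

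To count, I would sample $S_1 \subseteq S$ by including each $f_i$ independently with probability $1/2$ and apply Hoeffding (Lemma~\ref{lem:hoeff}) to $Y = |W(S_1)| = \sum_i w_i [f_i \in S_1]$, which has expectation $n/2$. The hypothesis $|N(f)| \leq \sigma_0^4 n/8$ gives
\[
\sum_{i=1}^s w_i^2 \;\leq\; (\max_i w_i) \cdot \sum_i w_i \;\leq\; (\sigma_0^4 n/8)\cdot n \;=\; \sigma_0^4 n^2/8,
\]
which is calibrated so that with deviation $t = \sigma_0^2 n/4$ the exponent in Hoeffding's bound simplifies to exactly $-1$, yielding $\Pr[|Y - n/2| \geq \sigma_0^2 n/4] \leq 2/e$. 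Hence at least a $(1-2/e)$ fraction of the $2^s$ subsets of $S$ are balanced, and since $s \geq \sigma_0 n$ and $1 - 2/e > 1/4$, this produces more than $2^{\sigma_0 n}/4$ balanced subsets, each yielding a distinct witness $(\sigma_0^2/4)$-halve.

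The only delicate point---and really the reason the hypothesis takes the particular form $|N(f)| \leq \sigma_0^4 n/8$---is the calibration above: the max-set-size bound is picked precisely so that Hoeffding concentrates $Y$ within an interval of half-width $\sigma_0^2 n/4$ with probability strictly greater than $1/4$, which is what allows the conclusion to hold with no extra slack. Everything else (disjointness of neighbourhoods, injectivity of $S_1 \mapsto W(S_1)$, and the definition-chasing to verify witness-halve status) is bookkeeping.
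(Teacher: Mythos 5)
Your proof is correct and essentially identical to the paper's: both fix a set partition $S$, sample a uniformly random subset $S_1 \subseteq S$, bound $\sum_i |N(f_i)|^2 \leq (\max_i |N(f_i)|)\cdot n \leq \sigma_0^4 n^2/8$, and apply the Hoeffding bound with deviation $t = \sigma_0^2 n/4$ to conclude that at least a $1 - 2/e > 1/4$ fraction of the $2^s \geq 2^{\sigma_0 n}$ subsets are balanced, each giving a distinct witness halve. The only (minor) difference is that you explicitly invoke the no-empty-sets hypothesis to justify injectivity of $S_1 \mapsto N(S_1)$, whereas the paper leaves this implicit in the phrase ``$N(S')$ determines $S'$.''
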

\begin{proof}
Note that the backward direction is trivial since by definition the existence of a witness halve implies the existence of a solution.

For the other direction, suppose $S = \{f_1,\ldots,f_s\}$ is a set partition, and denote $d_i = |N(f_i)|$. Suppose $S' \subseteq S$ is obtained by including every element of $S$ with probability $1/2$ in $S'$. since $N(f_i)\cap N(f_j)=\emptyset$ for $i\neq j$, we have that the random variable $|N(S')|$ is a sum of $s$ independent random variables that equal $0$ and $d_i$ with probability $1/2$. By the Hoeffding bound (Lemma~\ref{lem:hoeff}) we see that
\[
	\Pr\left[\Big|\left| N(S') \right| - \mathbb{E}[|N(S')|] \Big| \geq n\sigma_0^2/4\right] \leq 2\cdot \exp\left( \frac{-n^2\sigma_0^4/8}{\sum_{e \in S}d_e^2}\right) \leq 2\cdot \exp\left( \frac{-n^2\sigma_0^4/8}{n^2\sigma_0^4/8}\right) < \tfrac{3}{4},
\]
where the second inequality follows from $d_e \leq \sigma_0^4n/8$ and $\sum_{e \in S}d_e =n$. So for at least $2^{|S|}/4 \geq 2^{\sigma_0 n}/4$ subsets $S' \subseteq S$ we have that $|N(S')| \in (\tfrac{1}{2}\pm \sigma_0^2/4)n$. Thus, since for each such $S'$, $N(S')$ determines $S'$ and thus gives rise to a distinct witness halve, there are at least $2^{\sigma_0 n}/4$ witness $(\sigma_0^2/4)$-halves.
\end{proof}

Now we are ready to prove the first main theorem, which we recall here for convenience.
\begin{reptheorem}{thm:largesol}
There is a Monte Carlo algorithm that takes an instance of \setcover on $n$ elements and $m$ sets and an integer $\solsize$ as input and determines whether there exists a set cover of size $s$ in $O(2^{(1-\Omega(\sigma^4))n}m)$ time, where $\sigma=s/n$.
\end{reptheorem}

\begin{proof}
The algorithm implementing Theorem~\ref{thm:largesol} is given in Figure~\ref{alg:alglargesol}.
\begin{figure}[H]
\begin{framed}
\begin{algorithmic}[1]
\REQUIRE $\mathsf{A2}(G=(F\mathop{\dot{\cup}}U,E),\sigma)$.
\STATE Ensure $|N(f)| \leq \sigma^4n/1000$ using Observation~\ref{obs:nolinear}\label{lin:redlin}.
\FOR{every integer $s$ satisfying $\lfloor \sigma n/2 \rfloor \leq s \leq \sigma n$}\label{lin:forl}
	\STATE Create an $(n,m',s)$-instance $((F' \mathop{\dot{\cup}} U,E),s)$ of \setpartition where $F'$ is constructed by adding a vertex $f'$ with $N(f')=X$ for all $ f \in F, X \subseteq N(f)$.\label{lin:setpart}
	\STATE Let $\sigma_0 = s/n$.
	\LineIf{$\mathsf{A1}((F' \mathop{\dot{\cup}} U,E),s,\sigma_0,\sigma_0^2/4)= \yes$}{\algorithmicreturn\ \yes.}\label{lin:subrout}
\ENDFOR
\STATE Pick an arbitrary subset $X \in \binom{U}{n/2}$.\label{lintr1}
\STATE Find the sizes $l$ and $r$ of the smallest set covers in the instances induced by elements $X$ and respectively elements $U \setminus X$ in $O(2^{n/2}\poly(n)m)$ time with standard dynamic programming.\label{lintr2}
\LineIfElse{$l+r <= \sigma n$}{\algorithmicreturn\ \yes}{\algorithmicreturn\ \no.}\label{lin:retyes2}
\end{algorithmic}
\end{framed}
\caption{Algorithm for \setcover large solutions (implementing Theorem~\ref{thm:largesol}).}
\label{alg:alglargesol}
\end{figure}

We first focus on the correctness of this algorithm. It is clear that the algorithm never returns false positives on Line~\ref{lin:subrout} since Algorithm~$\mathsf{A1}$ also has this property. If $\mathbf{yes}$ is returned on Line~\ref{lin:retyes2} it is also clear there exists a solution.

Now suppose that a set cover $S$ of size at most $\sigma n$ exists. First suppose $\sigma n/2 \leq |S| \leq \sigma n$. We consider $\solsize=|S|$ in some iteration of the loop on Line~\ref{lin:forl}. Notice that now in Line~\ref{lin:setpart} we have reduced the problem to a yes-instance of \setpartition without empty sets satisfying $|N(f)| \leq \sigma^4n/1000$ for every $f \in F$. Therefore Lemma~\ref{lem:largesolutionmanywitnesssplits} applies with $\sigma_0\geq \sigma/2$ and we see there are at least $2^{\sigma_0 n}/4$ witness $(\sigma_0^2/4)$-halves. Thus, we can apply Theorem~\ref{thm:main} with $\zeta=\sigma_0$ and $\beta=\sigma^2_0/4$ to find the set $S$ with constant probability, since $\beta \leq (\zeta/2)^2$.

Now suppose $|S| \leq \sigma n/2$. Then picking every element in $S$ twice is a solution (as a multiset), and it implies that for every $X \subseteq U$ the sizes of the smallest set covers $l$ and $r$ (as defined in the algorithm) satisfy $l+r \leq \sigma n$. Thus Lines~\ref{lintr1}-\ref{lin:retyes2} find such a set and the algorithm returns $\mathbf{yes}$.

For the running time, Line~\ref{lin:redlin} takes at most $O(2^{(1-\sigma^4/1000)n}\poly(n)m)$ due to Observation~\ref{obs:nolinear}. For Line~\ref{lin:subrout}, due to Theorem~\ref{thm:main} this runs in time 
\begin{align*}
	O(2^{(1-(\zeta/2)^4)n}\poly(n)m') &= O(2^{(1-(\sigma_0/2)^4)n}\poly(n)2^{\sigma^4n/1000}m) \\
	&\leq O(2^{(1-(\sigma/4)^4)n}\poly(n)2^{\sigma^4n/1000}m) \\
	&= O(2^{(1+\sigma^4(1/1000-1/4^4))n}\poly(n)m) \\
	&\leq O(2^{(1-\Omega(\sigma^4))n}\poly(n)m).
\end{align*}
as claimed in the theorem statement.
\end{proof}

As a more direct consequence of the tools of the previous section we also get the following result for \setpartition:

\begin{theorem}\label{thm:manysets2}
There exists a Monte Carlo algorithm for \setpartition that, given oracle access to an $(n,m,\sigma n)$-instance satisfying $0 < |N(f)| \leq \sigma^4 n/8$ for every $f \in F$, runs in $2^{(1-\Omega(\sigma^4))n}\poly(n)T$ time.

Here the oracle algorithm accepts $X\subseteq U$ as input, and decides whether there exists $f \in F$ with $N(f)=X$ in time $T$.
\end{theorem}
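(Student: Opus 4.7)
My plan is to compose the two key ingredients of Section~\ref{sec:manyhalves} in essentially the simplest possible way, since the present statement is the oracle analog of Theorem~\ref{thm:largesol} with several complications stripped away. In particular, because the input is already a \setpartition instance, no analog of Lines~\ref{lin:redlin}--\ref{lin:setpart} of Algorithm~$\mathsf{A2}$ is needed; because the target solution size is fixed at exactly $\sigma n$, no loop over candidate sizes is needed either. The algorithm is therefore to run the oracle algorithm of Theorem~\ref{thm:manysets} on the given instance once, with carefully chosen parameters.

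First I would verify that Lemma~\ref{lem:largesolutionmanywitnesssplits} applies with $\sigma_0=\sigma$: by hypothesis the instance has no empty sets (since $|N(f)|>0$ for every $f\in F$), satisfies $|N(f)|\leq \sigma^4 n/8$, and seeks a partition of size exactly $\sigma n$. The lemma therefore converts the existence of a solution into the existence of at least $2^{\sigma n}/4$ witness $(\sigma^2/4)$-halves, which is precisely the structural input Theorem~\ref{thm:manysets} asks for.

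Next I would invoke Theorem~\ref{thm:manysets} with $\zeta=\sigma$ and $\beta=\sigma^2/4$. The hypothesis $2\sqrt{\beta}\leq \zeta$ holds with equality, and the promise of $\Omega(2^{\zeta n})$ witness $\beta$-halves is met by the previous paragraph. Consequently, if a solution exists the algorithm outputs \yes with constant probability, and if no partition of size $\sigma n$ exists it outputs \no, within time
\[
2^{(1-(\zeta/2)^4)n}\poly(n)T \;=\; 2^{(1-(\sigma/2)^4)n}\poly(n)T \;=\; 2^{(1-\Omega(\sigma^4))n}\poly(n)T,
\]
matching the claim.

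There is no genuine technical obstacle, since all of the substance is packaged inside Lemma~\ref{lem:largesolutionmanywitnesssplits} and Theorem~\ref{thm:manysets}. The only bookkeeping subtlety is the constraint $\zeta<1/4$ from Theorem~\ref{thm:manysets}, which restricts the direct argument above to the interesting regime $\sigma<1/4$. Outside that regime the target bound $2^{(1-\Omega(\sigma^4))n}$ asks only for a constant-factor improvement over $2^n$, which can be absorbed into the $\Omega(\sigma^4)$ constant by applying the same argument with a fixed admissible $\zeta$ (e.g. $\zeta=1/5$); the Hoeffding calculation in the proof of Lemma~\ref{lem:largesolutionmanywitnesssplits} is easily reworked to produce $\Omega(2^{\zeta n})$ witness halves of the required balance in this case as well.
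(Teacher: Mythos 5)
Your proof matches the paper's own proof exactly in its substantive content: both first invoke Lemma~\ref{lem:largesolutionmanywitnesssplits} with $\sigma_0=\sigma$ (the no-empty-sets hypothesis coming for free from $|N(f)|>0$) to reduce to the presence of $2^{\sigma n}/4$ witness $(\sigma^2/4)$-halves, and then apply Theorem~\ref{thm:manysets} with $\zeta=\sigma$ and $\beta=\sigma^2/4$, the constraint $2\sqrt{\beta}\le\zeta$ holding with equality. The paper's proof is a two-sentence version of this that does not even record the parameter instantiation, so your write-up is, if anything, more explicit.

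The one place you go further is also the one place you go astray. You correctly observe that the constraint $\zeta<1/4$ in Theorems~\ref{thm:main} and~\ref{thm:manysets} means the direct substitution $\zeta=\sigma$ only covers $\sigma<1/4$, a point the paper silently glosses over. But the proposed patch --- replace $\zeta$ by a fixed admissible value such as $1/5$ and ``rework the Hoeffding calculation'' --- does not go through. To apply Theorem~\ref{thm:manysets} with $\zeta=1/5$ one needs witness $\beta$-halves with $\beta\le\zeta^2/4=1/100$, i.e., subsets $W$ with $|W|$ within $n/100$ of $n/2$. The concentration in Lemma~\ref{lem:largesolutionmanywitnesssplits} is controlled by $\sum_{e\in S}d_e^2\le(\max_e d_e)\cdot n$, and under the stated hypothesis $\max_e d_e$ can be as large as $\sigma^4 n/8\ge n/2048$ when $\sigma\ge1/4$; plugging $t=n/100$ into Hoeffding then yields a vacuous bound. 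In short, shrinking $\zeta$ below $\sigma$ requires shrinking the witness-halve balance below $\sigma^2/4$, and the given set-size bound $|N(f)|\le\sigma^4 n/8$ is not strong enough to make Hoeffding deliver that tighter concentration. The $\sigma\ge1/4$ regime would therefore need a genuinely different argument (a stronger set-size reduction, or special handling of the few large sets), not a numerical re-tuning. Since the paper itself does not address this regime either, this is a shared gap rather than a defect unique to your proposal --- but the claim that the Hoeffding step is ``easily reworked'' is not justified.
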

\begin{proof}
Lemma~\ref{lem:largesolutionmanywitnesssplits} implies the instance is a YES-instance if and only if there exist  $2^{\sigma n}/4$ witness $(\sigma^2/4)$-halves. Thus Theorem~\ref{thm:manysets} implies the theorem statement.
\end{proof}

Note that this theorem also implies an $O((m+2^{(1-\Omega(\sigma^4))n})\poly(n))$ time algorithm for \setpartition where the sets are given explicitly because we can construct a binary search tree after which we can implement the oracle to run in $T=n$ query time. We remark that it would be interesting to see whether the assumption $|N(f)| \leq \sigma^2 n/4$ is needed, but removing this assumption seems to require more ideas than the ones from this work: For example if the solution has three sets of size $3n/10$ there will be no witness halve that is sufficiently balanced, and alternatively using Observation~\ref{obs:nolinear} seems to be too slow.

However, if we settle for a additive $1$-approximation we can deal with this issue in a simple way and have as a particular consequence the second result mentioned in the beginning of this paper:
\begin{reptheorem}{thm:col}
There is a randomized algorithm that given graph $G$ and integer $\solsize=\sigma n$, in $O^*(2^{(1-\Omega(\sigma^4))n})$ time outputs
\yes with constant probability, if $\chi(G) < \solsize$, and \no, if $\chi(G) > \solsize$.
\end{reptheorem}
\begin{proof}
Let $G=(V,E)$ and define a \setpartition instance where for every independent set $I \subseteq V$ of $G$ there is an element $f \in F$ with $N(f)=I$. It is easy to see that this instance of \setpartition has a solution of size $s$ if and only if $\chi(G) \leq s$. 

Check in $\binom{n}{\sigma^4 n/8}$ time whether $G$ has an independent set of size $\sigma^4n/8$. If such an independent set is found, remove this set from the graph and return yes if the obtained graph has a $(k-1)$-coloring and no otherwise. Using the $O^*(2^{n})$ time algorithm by Bj\"orklund et al.~\cite{DBLP:journals/siamcomp/BjorklundHK09} in the second step, this procedure clearly runs in time $O^*(2^{(1-\Omega(\sigma^4))n})$, and always finds a coloring using at most one more color than the minimum number of colors if a large enough independent set exists.

On the other hand, if the maximum independent set of $G$ is of size at most $\sigma^4n/8$, we may apply Theorem~\ref{thm:manysets2} with $T=\poly(n)$ since it can be verified in polynomial time whether a given $X \subseteq V$ is an independent set, and the theorem follows.
\end{proof}

\section{Directions for Further Research}\label{sec:concremark}
In this section, we relate the work presented to some notorious open problems. The obvious open question is to determine the exact complexity of the \setcover problem:

\begin{openproblem}
Can \setcover be solved in time $O^*((2-\Omega(1))^n)$?
\end{openproblem}
This question was already stated at several places. It is known that if a version of \setcover where the number of solutions modulo $2$ is counted can be solved in $(2-\Omega(1))^{n}$ the Strong Exponential Time Hypothesis fails. We refer to~\cite{DBLP:conf/coco/CyganDLMNOPSW12}, for more details.

Less ambitiously, it is natural to wonder whether our dependency on $\sigma$ can be improved. Our algorithm and analysis seem loose, but we feel the gain of a sharpening this analysis does not outweigh the technical effort currently: For a better dependence, we need both a better bound in Lemma~\ref{lem:bndclosure} and to reduce the set sizes more efficiently than in Observation~\ref{obs:nolinear}. As further research we suggest to find a different algorithmic way to deal with the case where many witness halves are unbalanced. But this alone will not suffice to give linear dependence in $\sigma$  since in Lemma~\ref{lem:bndclosure} we do not expect to get linear dependence on $\zeta$ even if $\beta=0$. It would also be interesting to see which other instances of \setcover can be solved in $O^*((2-\Omega(1))^n)$ time. One that might be worthwhile studying is whether this includes instances with optimal set covers in which the sum of the set sizes is at least $(1+\Omega(1))n$; one may hope to find exponentially many (balanced) witness halves here as well.

In~\cite{DBLP:conf/coco/CyganDLMNOPSW12}, the authors also give a reduction from \subsetsum{} to \setpartition{}. The exact complexity of \subsetsum{} with small integers is also something we explicitly like to state as open problem here, especially since the $O^*(t)$ time algorithm (where $t$ is the target integer) is perhaps one of the most famous exponential time algorithms:
\begin{openproblem}
Can \subsetsum{} with target $t$ be solved in time $O^*(t^{1-\Omega(1)})$, or can we exclude the existence of such an assuming the Strong Exponential Time Hypothesis?
\end{openproblem}
Note this question was before asked in~\cite{husfeldt_et_al:DR:2013:4342} by the present author. It would be interesting to study the complexity of \subsetsum{} in a similar vein as we did in this paper: are there some special properties allowing a faster algorithm? For example, a faster algorithm for instances of high `density' (e.g., $n/\lg t$) may be used for improving an algorithm of Horowitz\&Sahni~\cite{HorowitzSahni74}. Note that here the `density' of a \subsetsum{} instance is the inverse of what one would expect when relating to the definition of density of $k$-CNF formula.

Another question that has already open for a while is

\begin{openproblem}
Can \textsc{Graph Coloring} be solved in time $O^*(2^{(1-\Omega(1))n})$?
\end{openproblem}
Could the techniques of this paper be used to make progress towards resolving this question? While our algorithm seems to benefit from the existence of many optimal colorings, in an interesting paper Bj\"orklund~\cite{DBLP:journals/corr/Bjorklund15} actually shows that the existence of \emph{few} optimal colorings can be exploited in graphs of small pathwidth. Related to this is also the Hamiltonicity problem. In our current understanding this problem becomes easier when there is a promise that there are few Hamiltonian cycles (see~\cite{DBLP:conf/icalp/BjorklundDH15}, but also e.g.~\cite{DBLP:journals/siamcomp/ChariRS95} allows derandomizations of known probabilistic algorithms in this case), so a natural approach would be to deal explicitly with instances with many solutions by sampling dynamic programming table in a similar vein as done in this paper.

\subparagraph{Acknowledgements}
The author would like to thank Per Austrin, Petteri Kaski, Mikko Koivisto for their collaborations resulting in~\cite{DBLP:conf/stacs/AustrinKKN15,DBLP:journals/corr/AustrinKKN15} that mostly inspired this work, Karl Bringmann for discussions on applications of~\cite{DBLP:journals/corr/AustrinKKN15} to \textsc{Set Partition}, and anonymous reviewers for their useful comments.

\bibliographystyle{plain}
\bibliography{setcover}

\appendix

\makeatletter
\edef\thetheorem{\expandafter\noexpand\thesection\@thmcountersep\@thmcounter{theorem}}
\makeatother

\section{Missing Proofs}\label{app:A}

\subsection{Proof of Theorem~\ref{thm:redcovpart}}\label{app:redcovpart}
Given an $(n,m,s)$-instance of \setcover, use Observation~\ref{obs:nolinear} to assure that for every $f \in F$, $|N(f)| \leq \epsilon n$. It is easy to see that this instance is a YES-instance if and only if the \setpartition instance $(G',s)$ is a YES-instance, where $G'$ is the bipartite graph with $U$ on one side and $F'$ on the other with for every $f \in F$ and $X \subseteq N(f)$ a vertex in $F'$ with neighborhood $X$.

\subsection{Proof of Theorem~\ref{thm:covparteq}}\label{app:covparteq}
For the backward direction, an $O(m^c2^{(1-\epsilon')n})$ algorithm for \setpartition concatenated to the reduction of Theorem~\ref{thm:redcovpart} with $\epsilon=\epsilon'/(2c)$ gives a 
\[
	O\left(\left(m2^{\epsilon' n/(2c)}\right)^c 2^{(1-\epsilon')n}+m2^{(1-\epsilon)n}\right) \leq O(m^c2^{(1-\epsilon'/2)n}+m2^{(1-\epsilon)n})
\]
time algorithm for \setcover.
	
For the forward direction, given an $(n,m,s)$-instance of \setpartition first use Observation~\ref{obs:redsolsize} with $c = 2/ \epsilon$ to obtain an $(n,m^{2/\epsilon},s\epsilon/2)$ instance of \setpartition (without loss of generality, we may assume $s$ divides $2/\epsilon$ by decreasing $\epsilon$ and adding at most $2/\epsilon$ sets and elements). Denoting $s'=s\epsilon/2$, now iterate over all unordered integer partitions $a_1+\ldots+a_{s'} = n$, and for each such partition solve a \setcover instance $(((F' \dot{\cup} U'),E'),s')$ constructed as follows from the \setpartition instance with the assumed algorithm:
 \begin{itemize}
		\item For every $i=1,\ldots,s'$ add an element $e_i$ to $U'$
		\item For every $f \in F$ and $i$ such that $|N(f)|=a_i$ add a set $f'$ to $F'$ with $N(f')=f \cup \{e_i\}$.
 \end{itemize}
Note that possibly $a_i=a_j$ for $i \neq j$ we make at least two copies of every set $f \in F$ with $|N(f)|=a_i$. It is easy to see that this new \setcover instance is a YES-instance if and only if in the original \setpartition instance there is a set partition of size $s'$: each set only contains one element from $\{e_1,\ldots,e_{s'}\}$ so a Set Cover of size $s'$ needs to correspond with $f_1,\ldots,f_{s'} \in F$ satisfying $\sum_{i=1}^{s'}|N(f_i)|=n$.

 By a result of Hardy and Ramanujan~\cite{Hardy01011918}, there are at most $2^{O(\sqrt{n})}$ unordered integer partitions and hence concatenating this reduction with the assumed $O(m^{c}2^{(1-\epsilon)n})$ time \setcover algorithm leads to an 
\[
	O(2^{O(\sqrt{n})}m^{c2/\epsilon}2^{(1-\epsilon)(n+s')}) \leq O^*(2^{(1-\epsilon)(1+\epsilon/2)n})\leq O^*(2^{(1-\epsilon/2)n})
\]
time algorithm for \setpartition.

\subsection{Proof of Theorem~\ref{thm:fewsets}}\label{app:fewsets}
\begin{theorem}[\cite{DBLP:conf/iwpec/Koivisto09}]\label{thm:koiv}
Given an instance $n$-element set $N$, an integer $r$ and a family $\mathcal{F}$ of subsets of $N$ each of cardinality at most $r$, the partitions of $N$ into a give number of members of $\mathcal{F}$ can be counted in time $|\mathcal{F}|2^{\lambda_r n}\mathsf{poly}(n)$, where $\lambda_r = (2r-2)/\sqrt{(2r-1)^2-2\ln 2}$.
\end{theorem}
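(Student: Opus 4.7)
The plan is to prove the theorem by giving a dynamic programming algorithm restricted to subsets of size at most $\alpha n$ for a parameter $\alpha \in (0,1/2)$ to be optimized, and extracting $p_k(N)$ by combining the restricted table entries via a split identity that --- thanks to the assumption $|X| \leq r$ for every $X \in \mathcal{F}$ --- is supported only on small subsets.

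First I would fix a total order on $N$ and populate the table $p_j(S)$ (the number of partitions of $S$ into $j$ members of $\mathcal{F}$) for all $|S| \leq \alpha n$ and all $j$, via the recurrence
$$p_j(S) = \sum_{X \in \mathcal{F},\ \min(S) \in X \subseteq S} p_{j-1}(S \setminus X),$$
which fixes the part containing the smallest element of $S$. This recurrence is closed on the small regime, the table has at most $\binom{n}{\leq \alpha n} \leq 2^{h(\alpha)n}$ entries, and each costs $O(|\mathcal{F}|)$ to populate, giving $O(|\mathcal{F}| \cdot 2^{h(\alpha)n} \mathsf{poly}(n))$ in total. A mirrored DP based on the smallest element of $N \setminus T$ analogously yields $q_j(T) = p_j(N \setminus T)$ for all $|T| \leq \alpha n$ within the same budget.

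Next I would combine via the identity
$$\binom{k}{j^*}\, p_k(N) = \sum_{S \subseteq N} p_{j^*}(S) \cdot p_{k-j^*}(N \setminus S) = \sum_{S \subseteq N} p_{j^*}(S) \cdot q_{k-j^*}(S),$$
obtained by counting pairs (partition of $N$ into $k$ parts, choice of $j^*$ of those parts). Because $|X| \leq r$, the summand vanishes outside $|S| \in [j^*, j^* r]$; choosing $j^* = \lfloor \alpha n / r \rfloor$ then confines the sum to $|S| \leq \alpha n$, reducing the combination step to a $O(2^{h(\alpha)n})$-time dot product over precomputed DP values.

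Finally I would optimize $\alpha$. The runtime is $O(|\mathcal{F}| \cdot 2^{h(\alpha)n} \mathsf{poly}(n))$, and in the extremal regime $k = \lceil n/r\rceil$ the feasibility of the split forces a lower bound on $\alpha$ of the form $\alpha \geq \alpha^*(r)$, pinning the minimum of $h(\alpha)$ at a specific value. Substituting the sharper Taylor bound $h(1/2+x) \leq 1 - 2x^2/\ln 2$ in place of the crude $h(\alpha) \leq 1$ and carrying through the algebra yields $h(\alpha^*) = \lambda_r = (2r-2)/\sqrt{(2r-1)^2 - 2\ln 2}$. The main obstacle I anticipate is the mirrored DP for $q_j(T)$: the natural complementary recurrence $q_j(T) = \sum_X q_{j-1}(T \cup X)$ advances $|T|$ upward to $|T \cup X| \leq |T| + r$, so a naive implementation quickly drives the intermediate arguments out of the small-subset regime. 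Resolving this requires restricting the DP to a carefully chosen band of $(j, T)$ pairs that is simultaneously closed under the recurrence and contains every entry relevant to the split identity, together with an inductive argument that the total number of entries in this band remains within the $2^{h(\alpha)n}$ budget.
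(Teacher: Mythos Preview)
The paper does not prove this theorem; it is quoted from Koivisto's paper and used only as a black box in the proof of Theorem~\ref{thm:fewsets}. So there is no in-paper proof to compare against, but your sketch has a genuine gap that is worth naming.

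The split identity and the forward DP on $\{S:|S|\le\alpha n\}$ are fine. The entire difficulty is the mirrored table $q_j(T)=p_j(N\setminus T)$, and your own diagnosis is correct: the recurrence $q_j(T)=\sum_X q_{j-1}(T\cup X)$ must reach the base case $q_0(N)$, so it necessarily passes through arguments $T$ of every size up to $n$, and there is no way to confine it to $|T|\le\alpha n$. The ``band of $(j,T)$ pairs'' you propose as a fix is not a patch on your scheme --- it \emph{is} Koivisto's proof. His argument is precisely a bound on the number of states reachable in the top-down recursion $p_k(N)\to p_{k-1}(\cdot)\to\cdots$ (using that each step removes a set containing the current minimum, so after $j$ steps the state $S$ satisfies $\min(S)\ge j+1$ and $|N\setminus S|\le jr$), and once you have that bound the split and the separate forward DP are redundant. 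This also explains why your optimisation step is unjustified: with $j^*=\lfloor\alpha n/r\rfloor$ the support constraint $|S|\le j^* r\le\alpha n$ holds for every $\alpha>0$, so nothing in the argument as written prevents sending $\alpha\to 0$ and concluding polynomial time. The real lower bound on the exponent --- and the specific value $\lambda_r=(2r-2)/\sqrt{(2r-1)^2-2\ln 2}$ --- comes from the cost of the part you have not analysed, not from $h(\alpha)$ on $\binom{n}{\le\alpha n}$.
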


Note that the above result also immediately implies an algorithm deciding \setcover in time $2^r|\mathcal{F}|2^{\lambda_r n}\mathsf{poly}(n)$ since we may reduce this \setcover instance to a \setpartition instance by adding all subsets of the given sets. It is easily seen that for $r \geq 3$, $\lambda_r$ is sandwiched as 
\[
	1/2 \leq \lambda_r \leq (2r-2)/\sqrt{(2r-1.5)^2} = 1- \frac{1}{O(r)}.
\]

The algorithm implementing the theorem is given in Figure~\ref{alg:algfewsets}.

\begin{figure}[H]
\begin{framed}
\begin{algorithmic}[1]
\REQUIRE $\mathsf{A3}(G=(F\dot{\cup}U,E),s,r)$.
\IF{$\exists f \in F: |N(f)| \geq r$}
	\RETURN $\mathsf{A3}(G[F\setminus\{f\} \cup U],s,r) \vee \mathsf{A2}(G[F\setminus\{f\} \cup U\setminus N(f)],s-1,r)$.\label{line:branch}
\ELSE
	\STATE Use the algorithm of Theorem~\ref{thm:koiv}.\label{line:koiv}
\ENDIF
\end{algorithmic}
\end{framed}
\caption{Algorithm for \setcover or \setpartition with few sets (implementing Theorem~\ref{thm:fewsets}).}
\label{alg:algfewsets}
\end{figure}

We may bound the running time of Algorithm~$\mathsf{A3}$ by analyzing the branching tree of recursive calls where an execution reaching Line~\ref{line:koiv} represents a leaf. Note that the depth of the branching tree is at most $m$ and if we refer to the second recursive call of Line~\ref{line:branch} as the right branch, on any path from the root to a leaf there are at most $n/r$ right branches since in each such recursion step we decrease the size of $U$ in one call with at least $r$. Thus the number of paths from the root to a leaf in this tree with $i$ right branches is at most $\binom{m}{i}$. For each such a branch the size of $U$ has become $n-ir$ and since Line~\ref{line:koiv} is reached all sets are of size at most $r$, Theorem~\ref{thm:fewsets} applies and thus we spend $m\cdot\mathsf{poly}(n)2^{\lambda_r(n-ir)}$ per leaf of this type.

Thus, denoting $\mu = m/n$, the total running time can be written as
\begin{align*}
	m\cdot\mathsf{poly}(n)\sum_{i=1}^{m}\binom{m}{i}2^{\lambda_r(n-ir)} &= m\cdot\mathsf{poly}(n)2^{\lambda_r n} \left(1+\frac{1}{2^{\lambda_rr}}\right)^m\\
																										 &\leq m\cdot\mathsf{poly}(n)\left(2^{\lambda_r}\exp\left(\mu/2^{r/2}\right)\right)^n,
\end{align*}
where the equality follows from the binomial theorem and the inequality uses $(1+\frac{1}{n})^n \leq e$. Denoting $\mu=m/n$ and setting $r= \lceil 4\lg(\mu)\rceil$, we see that when taking asymptotics for growing $\mu$ (which is allowed since we may assume the running time complexity is monotone increasing with $\mu$), the running time becomes 
\[
 m\cdot\mathsf{poly}(n)\left(2^{1-\frac{1}{O(\lg \mu)} + \lg(e)/\mu}\right)^n = m\cdot\mathsf{poly}(n)2^{n-\frac{n}{O(\lg(m/n))}}.
\]

\subsection{Proof of Theorem~\ref{thm:subtle}}\label{app:subtle}
Given an arbitrary instance of \setpartition consisting of $G=(F \mathop{\dot{\cup}} U, E)$ and integer $s$, first construct an equivalent $(n,m',s')$ instance using Observation~\ref{obs:redsolsize} with $s'\leq\min\{\epsilon_1,\epsilon_2/2\}n$. Then add vertices $e_1,\ldots,e_{s'}$ to $U$ and replace every $f \in F$ with $s'$ copies $f_1,\ldots,f_{s'}$ where $N(f_{i})=N(f) \cup e_i$. It is easy to see that in this \setpartition instance all set partitions are of size $s'$ and it has a set partition of size $s'$ if and only if the original instance has one of size $s$. Thus, to transform the new instance into an equivalent $(n+s',m',\epsilon_1(n+s'))$-instance, simply add sufficiently many isolated vertices to $F$. Running the assumed algorithm on this instance results thus in an $O^*(2^{(1-\epsilon_2)(n+s')})=O^*(2^{(1-\epsilon_2)(1+\epsilon_2/2)n})=O^*(2^{(1-\epsilon_2/2)n})$ time algorithm.

\subsection{Proof of Theorem~\ref{thm:linsat}}

Recall from the main text that the \textsc{Linear Sat} problem is defined as follows: given an integer $t$, matrix $A \in \mathbb{Z}^{n\times m}_{2}$ and vectors $\vec{b} \in \mathbb{Z}^n_2$ and $\vec{\omega} \in \mathbb{N}^m$ the task is to find $\vec{x} \in \mathbb{Z}^m_2$ satisfying $A\vec{x}\equiv\vec{b}$ and $\vec{\omega} \cdot \vec{x} \leq t$. 

We let $\vec{a_1},\ldots,\vec{a_m} \in \mathbb{Z}^n_2$ denote the column vectors of $A$, and denote $\wt(x)$ for the Hamming weight of a vector $\vec{x}\in\mathbb{Z}^m_2$.

The algorithm implementing the theorem is outlined in Algorithm~\ref{alg:linsat}. The algorithm assumes the Hamming weight of the vector $\vec{x}$ minimizing minimum $\vec{\omega x}$ is at most $2m/3$ (which is in order to facilitate the running time analysis). We can assume this since if the minimum set of columns summing to $\vec{b}$ consists of more than $2m/3$ columns, the rank of $A$ is at least $2m/3$ (as the solution needs to consist of linearly independent columns) and we can solve the problem in $O^*(2^{m/3})$ time by finding a column basis, iterating over all subsets of columns not in the basis and for each such compute the unique way to extend it to a set of columns summing to $\vec{b}$ if it exists (this is a standard technique called `Information Set Decoding' introduced in~\cite{MCeliece}). Note, to ensure the Hamming weight is at most $2m/3$, we cannot simply assume the solution is at most size at most $m/2$ by looking for the complement otherwise since this gives a maximization problem. 

\begin{figure}
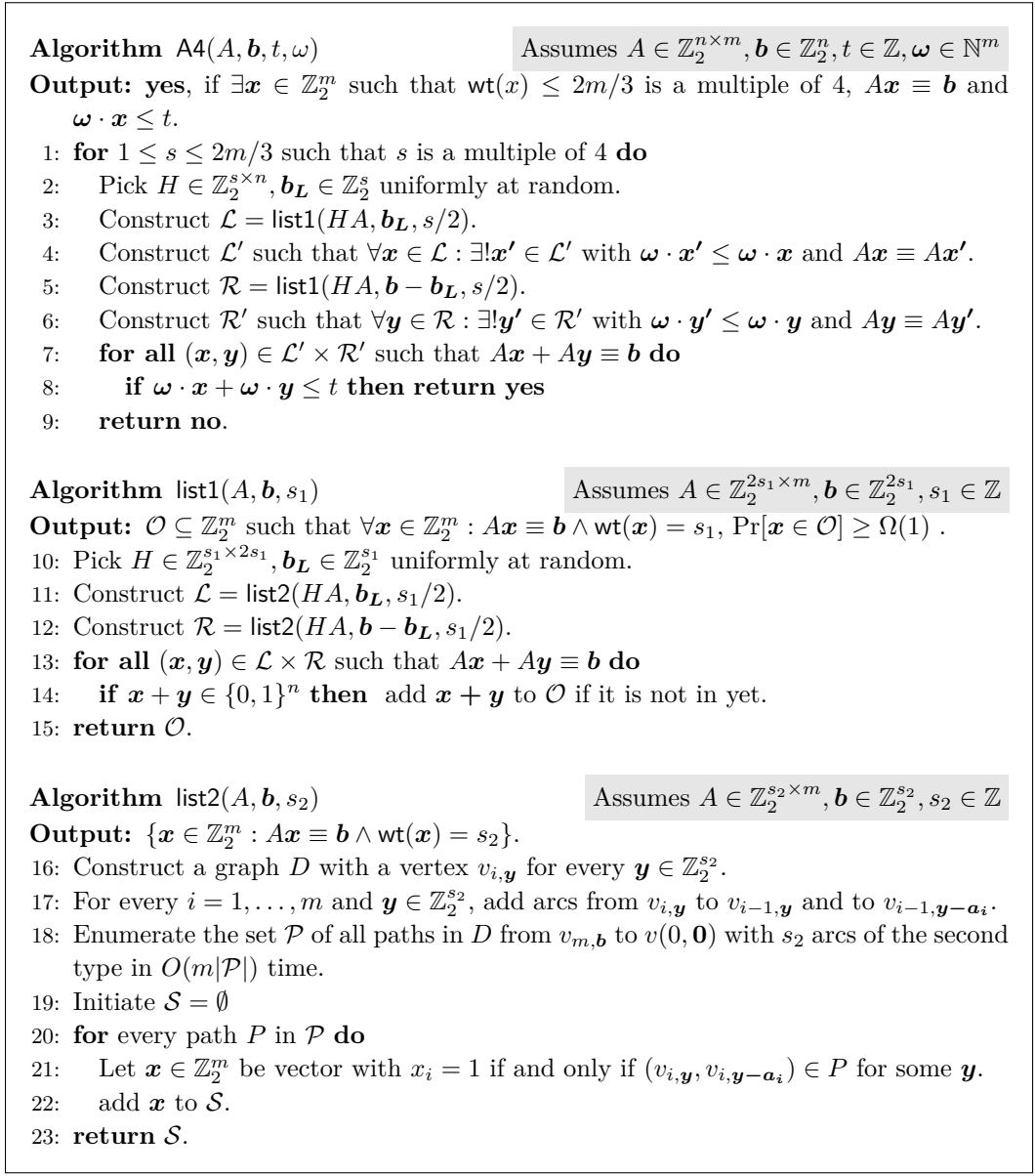

\begin{framed}
  \begin{algorithmic}[1]
    \REQUIRE $\mathsf{A4}(A,\vec{b},t,\omega)$\hfill\algorithmiccommentt{Assumes $A  \in \mathbb{Z}^{n \times m}_2, \vec{b} \in \mathbb{Z}^n_2, t \in \mathbb{Z},\vec{\omega}\in\mathbb{N}^m$}
    \ENSURE $\mathbf{yes}$, if $\exists\vec{x} \in \mathbb{Z}^m_2$ such that $\wt(x)\leq 2m/3$ is a multiple of $4$, $A\vec{x} \equiv \vec{b}$ and $\vec{\omega}\cdot\vec{x}\leq t$.
		\FOR{$1\leq s \leq 2m/3$ such that $s$ is a multiple of $4$}\label{lin:forloop}
		\STATE Pick $H \in \mathbb{Z}_2^{s \times n},\vec{b_L} \in \mathbb{Z}_2^{s}$ uniformly at random.
    \STATE Construct $\cL=\mathsf{list1}(HA,\vec{b_L},s/2)$. \label{alg1:build cL}
		\STATE Construct $\cL'$ such that $\forall \vec{x} \in \cL: \exists! \vec{x'} \in \cL'$ with $\vec{\omega}\cdot\vec{x'} \leq \vec{\omega}\cdot\vec{x}$ and $A\vec{x} \equiv A\vec{x'}$. \label{alg2:build cL2}
    \STATE Construct $\cR=\mathsf{list1}(HA,\vec{b}-\vec{b_L},s/2)$. \label{alg1:build cR}
		\STATE Construct $\cR'$ such that $\forall \vec{y} \in \cR: \exists! \vec{y'} \in \cR'$ with $\vec{\omega}\cdot\vec{y'} \leq \vec{\omega}\cdot\vec{y}$ and $A\vec{y} \equiv A\vec{y'}$. \label{alg2:build cR2}
    \FORALL{$(\vec{x},\vec{y}) \in \cL' \times \cR'$ such that $A\vec{x} + A\vec{y} \equiv \vec{b}$} \label{alg1:combine lists}
    \LineIf{$\vec{\omega}\cdot\vec{x}+\vec{\omega}\cdot\vec{y} \leq t$}{\algorithmicreturn\ $\mathbf{yes}$}\label{line:linyes}
    \ENDFOR
    \STATE \algorithmicreturn\ $\mathbf{no}$.
		\ENDFOR
  \end{algorithmic}
	\vspace{1em}
		\begin{algorithmic}[1]\setcounter{ALC@line}{9}
    \REQUIRE $\mathsf{list1}(A,\vec{b},s_1)$\hfill\algorithmiccommentt{Assumes $A  \in \mathbb{Z}^{2s_1 \times m}_2, \vec{b} \in \mathbb{Z}^{2s_1}_2, s_1 \in \mathbb{Z}$}
    \ENSURE $\mathcal{O} \subseteq \mathbb{Z}^m_2$ such that $\forall \vec{x}\in \mathbb{Z}^m_2:A\vec{x} \equiv \vec{b} \wedge \wt(\vec{x})=s_1$, $\Pr[\vec{x} \in \mathcal{O}] \geq \Omega(1)$ .
		\STATE Pick $H \in \mathbb{Z}_2^{s_1 \times 2s_1},\vec{b_L} \in \mathbb{Z}_2^{s_1}$ uniformly at random.
		\STATE Construct $\cL=\mathsf{list2}(HA,\vec{b_L},s_1/2)$.\label{lin:alg2call1}
		\STATE Construct $\cR=\mathsf{list2}(HA,\vec{b}-\vec{b_L},s_1/2)$.\label{lin:alg2call2}
		\FORALL{$(\vec{x},\vec{y}) \in \cL \times \cR$ such that $A\vec{x} + A\vec{y} \equiv \vec{b}$} \label{alg2:combine lists}
    \LineIf{$\vec{x}+\vec{y}\in\{0,1\}^n$}{ add $\vec{x+y}$ to $\mathcal{O}$ if it is not in yet.}
    \ENDFOR
		\STATE \algorithmicreturn\ $\mathcal{O}$.
  \end{algorithmic}

\vspace{1em}	
	\begin{algorithmic}[1]\setcounter{ALC@line}{15}
    \REQUIRE $\mathsf{list2}(A,\vec{b},s_2)$\hfill\algorithmiccommentt{Assumes $A  \in \mathbb{Z}^{s_2 \times m}_2, \vec{b} \in \mathbb{Z}^{s_2}_2, s_2 \in \mathbb{Z}$}
    \ENSURE $\{\vec{x} \in \mathbb{Z}^m_2: A\vec{x} \equiv \vec{b} \wedge \wt(\vec{x})=s_2\}$.
		\STATE Construct a graph $D$ with a vertex $v_{i,\vec{y}}$ for every $\vec{y} \in \mathbb{Z}^{s_2}_2$.\label{lin:constr1}
		\STATE For every $i=1,\ldots,m$ and $\vec{y} \in\mathbb{Z}^{s_2}_2$, add arcs from $v_{i,\vec{y}}$ to $v_{i-1,\vec{y}}$ and to $v_{i-1,\vec{y-a_i}}$.\label{lin:constr2}
		\STATE Enumerate the set $\mathcal{P}$ of all paths in $D$ from $v_{m,\vec{b}}$ to $v(0,\vec{0})$ with $s_2$ arcs of the second type in $O(m|\mathcal{P}|)$ time.\label{lin:enum}
		\STATE Initiate $\mathcal{S} = \emptyset$
		\FOR{every path $P$ in $\mathcal{P}$}
			\STATE Let $\vec{x}\in\mathbb{Z}_2^m$ be vector with $x_i=1$ if and only if $(v_{i,\vec{y}},v_{i,\vec{y-a_i}}) \in P$ for some $\vec{y}$.
			\STATE add $\vec{x}$ to $\mathcal{S}$.
		\ENDFOR
		\STATE \algorithmicreturn\ $\mathcal{S}$.
  \end{algorithmic}
	\end{framed}
	\caption{An $O^*(2^{0.3399n})$ time algorithm for \textsc{Linear Sat}.}
  \label{alg:linsat}
\end{figure}

\subparagraph{Running Time} First note that Line~\ref{lin:constr1} and Line~\ref{lin:constr2} can be implemented in time $2^{s_2}m$. Line~\ref{lin:enum} can be implemented with linear delay by elementary methods, and thus $\mathsf{list2}$ runs in $O(m(2^{s_2}+|\mathsf{list2}(A,\vec{b},s_2)|))$ time. For Algorithm~$\mathsf{list1}$, Line~\ref{lin:alg2call1} and Line~\ref{lin:alg2call2}, we see that $\mathbb{E}[|\mathsf{list2}(A,\vec{b},s_1/2)|]=\binom{m}{s_1/2}2^{-s_1}$, because any vector from $\mathbb{Z}^m_2$ of weight $s_1/2$ will be included with probability $2^{-s_1}$ in the output. Thus these lines run in expected time $O(m(2^{s_1/2}+\binom{m}{s_1/2}2^{-s_1}))$. 

Similarly, the for-loop at Line~\ref{alg2:combine lists} will take on expectation $|\mathcal{P}|2^{-s_1}$ iterations, where $\mathcal{P}$ is the set of pairs $(\vec{x},\vec{y})\in \left(\mathbb{Z}^m_2\right)^2$ such that $\wt(\vec{x})=\wt(\vec{y})=s_1/2$ and $A\vec{x}+A\vec{y}\equiv \vec{b}$. Here we divide by $2^{s_1}$ since this is the probability that such pair is in $\mathcal{L} \times \mathcal{R}$ because for this it needs to satisfy $HA\vec{x}\equiv \vec{b_L}$.

For algorithm $\mathsf{A4}$ at iteration $s$, Line~\ref{alg1:build cL} and Line~\ref{alg1:build cR} thus take expected time
\[
	O\left(m\left(2^{s_1/2}+\binom{m}{s_1/2}2^{-s_1}+\mathbb{E}[|\mathcal{P}|2^{-s_1}]\right)\right),
\]
where $s_1$ denotes $s/2$. Note that $\mathbb{E}[|\mathcal{P}|] \leq \binom{m}{s_1/2}^2/2^{s}$ since any pair $(\vec{x},\vec{y})$ in $\mathcal{P}$ also needs to satisfy $A\vec{x}+A\vec{y}\equiv \vec{b}$.  Denoting $\sigma=s/m$ and using $s_2=s_1/2=s/4$ we can thus upper bound the running time with
\[
 O\left(m2^{m\max_{\sigma \leq 2/3}\{\sigma/4,h(\sigma/4)-\sigma/2,2h(\sigma/4)-3\sigma/2\}}\right).
\]
It is easily verified by standard calculus or a Mathematica computation that the exponent in this expression is maximized for $\sigma=0.4444$ where it is at most $O(m2^{0.3399m})$.
Note that Line~\ref{alg2:build cL2} and Line~\ref{alg2:build cR2} can be implemented in time $O(m|\mathcal{L}|)$ and $O(m|\mathcal{R}|)$ with standard data structures and similarly we can efficiently enumerate all pairs in the for-loop at Line~\ref{alg1:combine lists}. Also note that in this for-loop the number of enumerated pairs is at most $|\mathcal{L}'|+|\mathcal{R}'|$ since all vectors $A\vec{x}$ for $\vec{x} \in \mathcal{L}'$ are different. Thus the algorithm indeed runs in the claimed running time.

\subparagraph{Correctness}
 For correctness, first note that whenever $\mathbf{yes}$ is returned on Line~\ref{line:linyes} this is clearly correct. Now suppose the instance of \textsc{Linear Sat} is a YES-instance. Then there exists a minimum weight $\vec{x} \in \mathbb{Z}^m_2$ satisfying $A\vec{x}\equiv\vec{b}$ and $\vec{\omega} \cdot \vec{x} \leq t$. Let us consider the iteration of the for-loop on Line~\ref{lin:forloop} where $s=\wt(\vec{x})$. Since $\vec{x}$ has minimum weight, the columns $i$ of $A$ for which $x_i=1$ need to be linearly independent since otherwise leaving out a linear combination would result in a smaller weight vector $\vec{x}$. This means that if we denote $\cZ=\{ A\vec{y}: \vec{y} \subseteq \vec{x} \wedge \wt(\vec{y})=s/2 \}$, then $|\cZ| = \binom{s}{s/2}$, where $\vec{y} \subseteq \vec{x}$ denotes $y_i \leq x_i$ for every coordinate $i$.

For $\vec{v}\in\mathbb{Z}^s_2$, let $f(\vec{v}) = |\{\vec{z} \in \cZ:H\vec{z}=\vec{v}\}|$ be an indicator function. We see that
\[
	\mathbb{E}\left[\sum_{\vec{v}\in\mathbb{Z}^s_2}f(\vec{v})^2\right] = \sum_{\vec{z_1},\vec{z_2} \in \cZ^2}\Pr[H(\vec{z_1}-\vec{z_2})=0] \leq |\cZ|+|\cZ|^2 2^{-s} \leq 2|\cZ|. 
\]
By Markov's inequality we thus see that $\Pr[\sum_{x\in\mathbb{Z}^s_2}f(x)^2 \leq 4|\cZ|] \geq 1/2$ over the choice of $H$. Conditioned on this, the Cauchy-Schwarz inequality implies that the number of $\vec{x} \in \mathbb{Z}^s_2$ such that $f(\vec{x})>0$ is at least $|\cZ|^2/(2|\cZ|) =|\cZ|/2$. When this happens, we have with probability at least $\binom{s}{s/2}/2^s=\Omega(1/\sqrt{s})$ that $\vec{b_L} \in \cZ$. If $\vec{b_L} \in \cZ$, we can apply the same reasoning to show that $\mathsf{list1}(HA,\vec{b_L},s/2)$ contains $\vec{y}$ with probability $\Omega(1/\sqrt{s})$ and $\mathsf{list1}(HA,\vec{b-b_L},s/2)$ contains $\vec{x}-\vec{y}$ with probability $\Omega(1/\sqrt{s})$, and the pair will be considered in the loop on Line~\ref{alg1:combine lists}. Since the latter two probabilities are independent we see that if a solution exists we find it with probability at least $\Omega(n^{-1.5})$, and thus we may repeat the algorithm $O(n^{1.5})$ times to ensure it finds a solution with constant probability if it exists.

\subsection{Proof of Lemma~\ref{lem:downclosure1}}
\label{app:knowndp}

Let $F=\{f_1,\ldots,f_m\}$ be arbitrarily ordered. For integers $i \in [n]$, $j \in [m]$ and $X \subseteq U$ define $c^i_j[X]$ to be true if and only if there exists $S_1 \subseteq \{f_1,\ldots,f_j\}$ such that $|S_1|=i$, $N(S_1)=X$ and for every $f,f' \in S_1$ with $f \neq f'$, $N(f) \cap N(f') = \emptyset$. We see that $c^i_0[X]$ is true if and only if $i=0$ and $X=\emptyset$, and for $j>0$ we have
\[
	c_j[X] = (c^{i-1}_{j-1}[X\setminus N(f_j)] \wedge N(f_j) \subseteq X) \vee c^{i}_{j-1}[X].
\]
The values $c_i(W)$ for $W \in \cW$ and $i \in [n]$ can be read off from $c^j_i(W)$, and because for computing $c^i_j[X]$ we only need the entries $c^i_j[Y]$ where $Y \subseteq X$, we can restrict our computation to computing $c^i_j(X)$ for $X \in {\downarrow}\cW$, and the runtime bound follows.

\subsection{Proof of Lemma~\ref{lem:downclosure2}}
\label{subsec:yates}

Define $f_x(X)$ to be true if and only if $\exists f \in F: N(f) = X$ and $|X|=x$. Note we can, within the claimed time bound, create a table storing the values $f_x(X)$ for all $X \in {\downarrow}\cW$ using the oracle. Now define $g_x(Y) = \sum_{Y \subseteq X}f_x(X)$. Let $U=\{u_1,\ldots,u_n\}$ and define
\[
	g^{j}_x(X) = \sum_{X \cap\{u_1,\ldots,u_j\} \subseteq Y \subseteq X}f_x(Y).
\]
Then we see that $g^{n}_x(X)=f_x(X)$, and for $j < n$ we can compute $g^j_{x}$ using
\begin{equation}\label{eq:yates}
	g^{j}_x(X) = [u_j \in X]g^{j+1}_x(X\setminus \{ u_j \})+g^{j+1}_x(X).
\end{equation}
Thus, by straightforward dynamic programming using~\eqref{eq:yates} we can compute $g_x(X)=g^0_x(X)$ for every $x$ and $X \in {\downarrow}\cW$ in $O(n^2|{\downarrow}\cW|)$ time. Next, define 
\[
	h_{x,i}(X)= \sum_{x_1+x_2+\ldots+x_i=x}g_{x_1}(X)g_{x_2}(X)\cdots g_{x_i}(X),
\]
or equivalently, $h_{x,i}$ is the number of tuples $f_1,\ldots,f_i$ such that $N(f_i) \subseteq X$ and $\sum_{l=1}^i |N(f_i)|=x$. Note that for a fixed $X$, given the entries $g_x(X)$ for every $x \leq n$, we can compute $h_{x,i}(X)$ in time $\poly(n)$ using standard dynamic programming or the Fast Fourier Transformation. Denote 
\[
	c'_{x,i}(X) = \left| \left \{(f_1,\ldots,f_i) \in F^i: N(\{f_1,\ldots,f_i\}) = X \wedge \sum_{l=1}^i |N(f_l)| = x \right\} \right|,
\]
which is easily seen to be the number $i$-tuples of disjoint sets whose union is exactly $X$ if $x=|X|$. Note that $h_{x,i}(X) = \sum_{Y \subseteq X} c'_{x,i}(Y)$ since every $i$-tuple counted in $h_{x,i}(X)$ is counted once in a $c'_{x,i}(Y)$ where $Y$ is the union of the $i$-tuple. Then, since any non-empty set has equally many even-sized as odd-sized subsets (e.g., by inclusion exclusion), we see that
\begin{equation}\label{eq:incexc}
\begin{aligned}
	c'_{x,i}(X) &= \sum_{Y \subseteq X} \left( \sum_{Z \subseteq X \setminus Y} (-1)^{|Z|} \right) c'_{x,i}(Y)\\
	&= \sum_{Z \subseteq X} (-1)^{|Z|} \left(\sum_{Y \subseteq X \setminus Z} c'_{x,i}(Y)\right) = \sum_{Z \subseteq X} (-1)^{|Z|} h_{x,i}(X\setminus Z).
\end{aligned}
\end{equation}
Then, define
\[
	h^j_{x,i}(X) = \sum_{X \cap\{u_1,\ldots,u_j\} \subseteq Z \subseteq X}(-1)^{|Z|}h_{x,i}(X \setminus Z),
\]
and similarly to~\eqref{eq:yates}, we see that $h^n_{x,i}(X)=h_{x,i}(X)$ and for $j < n$ we can compute $h^j_{x,i}$ using
\begin{equation}\label{eq:moeb}
	h^j_{x,i}[X] = -[u_i \in X]h^{j+1}_{x,i}(X\setminus \{u_i\})+h^{j+1}_{x,i}(X).
\end{equation}

Consequently, we can use Equation~\ref{eq:moeb} for a straightforward dynamic programming algorithm to determine $h^0_{|X|,i}[X]$ which equals $c'_{x,i}(X)$ by~\eqref{eq:incexc}, and it is easy to see that $c_i(X)$ is true if and only if $c_{|X|,i}(X) >0$.

\end{document}